\newtheorem{theorem}{Theorem}
\newtheorem{lemma}{Lemma}
\newtheorem{corollary}{Corollary}
\newtheorem{remark}{Remark}
\newenvironment{theorem-repeat}[1]{\begin{trivlist}
		\item[\hspace{\labelsep}{\bf\noindent Theorem \ref{#1}.}]\em }%
	{\end{trivlist}}
	\newenvironment{corollary-repeat}[1]{\begin{trivlist}
		\item[\hspace{\labelsep}{\bf\noindent Corollary \ref{#1}.}]\em }%
	{\end{trivlist}}
\newcommand{\qedsymb}{\qed}
\newcommand{\true}{\texttt{true}}
\newcommand{\false}{\texttt{false}}
\newcommand{\Rvi}{R^{v,2}_{i}}
\newcommand{\IsEmpty}{\texttt{IsEmpty}}
\newcommand{\AreNeighborsEmpty}{\texttt{AreNeighborsEmpty}}
\newcommand{\inconsistent}{\texttt{inconsistent}}
\newcommand{\congest}{\mathsf{CONGEST}\xspace}
\title{Finding Subgraphs in Highly Dynamic Networks} %TODO Please add
\begin{document}

\author{Keren Censor-Hillel\thanks{Department of Computer Science, Technion, Haifa, Israel, ckeren@cs.technion.ac.il. This project has received funding from the European Union’s Horizon 2020 Research And Innovation Program under grant agreement no. 755839.} \and Victor I. Kolobov\thanks{Department of Computer Science, Technion, Haifa, Israel, tkolobov@cs.technion.ac.il.} \and Gregory Schwartzman\thanks{Japan Advanced Institute of Science and Technology, greg@jaist.ac.jp. This work was supported by JSPS Kakenhi Grant Number JP19K20216.}}
\date{}
\maketitle

\begin{abstract}
In this paper we consider the fundamental problem of finding subgraphs in highly dynamic distributed networks -- networks which allow an arbitrary number of links to be inserted / deleted per round. We show that the problems of $k$-clique membership listing (for any $k\geq 3$), 4-cycle listing and 5-cycle listing can be deterministically solved in $O(1)$-amortized round complexity, even with limited logarithmic-sized messages.

To achieve $k$-clique membership listing we introduce a very useful combinatorial structure which we name the \emph{robust $2$-hop neighborhood}. This is a subset of the 2-hop neighborhood of a node, and we prove that it can be maintained in highly dynamic networks in $O(1)$-amortized rounds. We also show that maintaining the actual 2-hop neighborhood of a node requires near linear amortized time, showing the necessity of our definition. For $4$-cycle and $5$-cycle listing, we need edges within hop distance 3, for which we similarly define the \emph{robust $3$-hop neighborhood} and prove it can be maintained in highly dynamic networks in $O(1)$-amortized rounds.

We complement the above with several impossibility results. We show that membership listing of any other graph on $k\geq 3$ nodes except $k$-clique requires an almost linear number of amortized communication rounds. We also show that $k$-cycle listing for $k\geq 6$ requires $\Omega(\sqrt{n} / \log n)$ amortized rounds. This, combined with our upper bounds, paints a detailed picture of the complexity landscape for ultra fast graph finding algorithms in this highly dynamic environment. 

\end{abstract}

\section{Introduction}
\label{section:introduction}
Large-scale distributed systems are at the heart of many modern technologies, prime examples being the Internet, peer-to-peer networks, wireless systems, and more. Such environments are inherently subject to dynamic behavior, which in some cases is highly unpredictable. 
For example, highly unpredictable real-world large-scale peer-to-peer networks (up to millions of peers) were studied for a broad selection of applications, such as file-sharing, conferencing, or content distribution~\cite{GummadiSG02,SenW04,StutzbachR06,FalknerPJKA07      ,GummadiDSGLZ03,IsdalPKA10}. It was observed that such networks exhibit a wide range of peer session lengths, ranging from minutes to days, with sessions being short on average but having a \emph{heavy tailed} distribution, demonstrating the heterogenous nature of dynamic peer behavior in the network. Due to the increasing relevance of distributed systems with potentially unpredictable dynamic behavior, there has been abundant research about computing in dynamic distributed networks. 

In this work, we focus on a very harsh setting in which no bound is given on the number or location of links that appear or disappear from the network at a given time, and no structure at all is imposed on the network graph. Such a highly dynamic setting was first studied by Bamberger et al.~\cite{BambergerKM18}, who showed fast algorithms for packing and covering problems, and was then studied by Censor-Hillel et al.~\cite{CDKPS}, who showed fast algorithms for some locally checkable labelings (LCLs). Here, we also adhere to the bandwidth restriction of the latter, allowing only $O(\log n)$ bits to be sent on a link per round.

We focus on subgraph detection problems in highly dynamic distributed networks. Detecting small subgraphs is a fundamental problem in computing, and a particular interest arises in distributed systems. A motivating example is that some tasks admit efficient distributed algorithms in triangle-free graphs~\cite{Hirvonen2017,Pettie2015}. 

\subsection{The network model}
\label{section:preliminaries}
We assume a synchronous network that starts as an empty graph on $n$ nodes and evolves into the graph $G_i = (V_i,E_i)$ at the beginning of round $i$. In each round, the nodes receive indications about the topology changes of which they are part of, for both insertions and deletions. Then, each node can send a message of $O(\log n)$ bits to each of its neighbors.

As a generalization of centralized dynamic data structures, we view a distributed dynamic algorithm as one that maintains a data structure that is distributed among the nodes of the network graph. Any node can be queried at any time for some information. A distributed dynamic data structure needs to respond to a query according to the information it has, without any further communication, in order to avoid stale responses. A crucial difference between the distributed and dynamic settings is that we allow the distributed data structure to also respond that it is in an inconsistent state, in case the neighborhood of the queried node is undergoing topology changes: note that in a centralized setting we can process  topology changes one at a time in order to be able to determine the response to queries, but in the distributed case the topology change may affect the ability to communicate on top of affecting the response itself. Thus, we allow the data structure at a node to indicate that its updating process is still in progress, by responding that it is in an inconsistent state. 

Formally, a distributed dynamic data structure ($DS$) is a data structure that is split among the nodes, such that each node $v$ holds a part $DS_v$. Given a problem $P$, the data structure $DS_v$ at a node $v$ could be queried for a solution for $P$. Upon a query, the data structure $DS_v$ at node $v$ needs to respond with a correct answer \emph{without communication} or indicate that it is in an inconsistent state. Indeed, we refer to Figure~\ref{fig:round} for an illustration of the different stages of our algorithms in the fully dynamic setting.

\begin{figure}[h]
	\begin{center}
		\includegraphics[clip, scale=0.8]{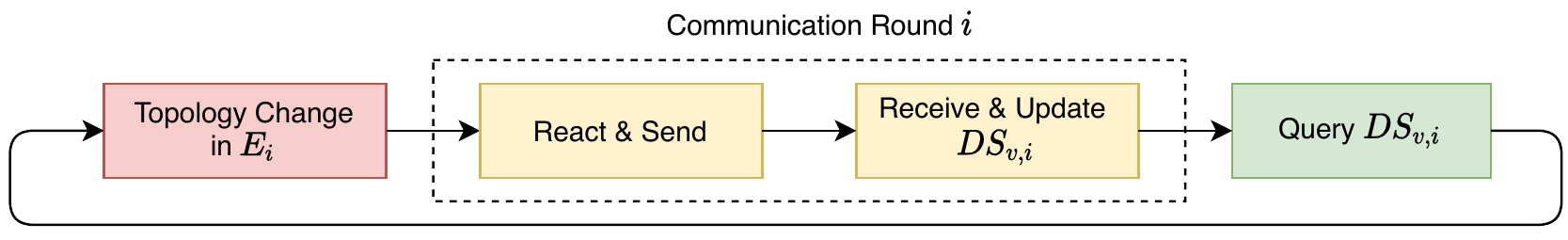}
	\end{center}
	\caption{\label{fig:round} An illustration of the different stages of our algorithms in the fully dynamic setting. At the beginning of a round $i$, there are some topological changes to $E_i$, of which the nodes are \textbf{locally} notified. The communication round that follows is divided into two halves. First, every node $v$ reacts to the changes in $E_i$ by manipulating its local data structure $DS_{v,i}$ in round $i$, followed by sending messages to its neighbors. Next, every node $v$ reads the messages received from its neighbors and updates $DS_{v,i}$ accordingly. At the end of the round, one can query $DS_{v,i}$. The response for the query, which is promised to satisfy some correctness guarantees, must be given immediately, without any further communication.}
\end{figure}

Naturally, an algorithm can trick this definition by always responding that the data structure is in an inconsistent state. However, our complexity measure charges the algorithms when a queried data structure responds that it is inconsistent. Thus, the aim is to design data structures that produce correct responses whenever possible. In general, one may address the \emph{worst-case round complexity} of a distributed dynamic algorithm, which is the maximum number of rounds between the time that a data structure $DS_v$ at a node $v$ becomes inconsistent due to topology changes and the minimum between the time that it becomes consistent again or some additional topology change touches its neighborhood. Crucially, we stress that the problems we consider in this paper \emph{do not admit} algorithms with good worst-case round complexity. Indeed, consider the counterexample where an adversary that starts from an empty graph, connects edges in the graph arbitrarily in the first round and then does no more changes. Since we get an arbitrary graph, an efficient algorithm here for, say, membership listing, would imply one for the $\congest$ model, contradicting the near-linear lower bound $\Omega(n/\log n)$ of~\cite{Izumi2017}.

Nevertheless, we attempt to capture the ``next best thing''. To this end, we consider an \emph{amortized} notion of round complexity, in a similar fashion to the usual notion of amortized complexity, where one divides the total runtime by the number of times a data structure was modified. Since the networks we consider are long-lived environments regardless of whether topology changes occur, we do not count rounds per se, but rather we say that the amortized round complexity of an algorithm is $k$ if for every $i$, until round $i$, the number of rounds in which there exists at least one node $v$ with an inconsistent $DS_v$ divided by the number of topology changes which occurred, is bounded by $k$.  For simplicity, we count the number of topological changes occurring \emph{globally}, although our results hold even if we count the \emph{maximal number of changes occurring at a node}. 

We note that the algorithms of~\cite{CDKPS} can be viewed in the same manner, although for the problems considered there, one cannot rule out algorithms with a good worst-case complexity. We also emphasize that obtaining fast amortized complexity in various dynamic settings is extensively investigated (see, e.g.,~\cite{GuptaK0P17,AbboudA0PS19,BhattacharyaK19,Wajc20,ItalianoLMP19}).

\subsection{Technical contributions}
Our main result is that each node can maintain a list of all triangles to which it belongs, and this can be done in an $O(1)$ amortized number of rounds. 
%Roughly speaking, this complexity measure captures the ratio between the number of rounds in which there is any inconsistent data structure at any node and the number of topology changes (see Section~\ref{section:preliminaries} for the definitions). 

\newcommand{\TheoremTriangles}
{
There is a deterministic distributed dynamic data structure for triangle membership listing, which handles edge insertions and deletions in $O(1)$ amortized rounds.
}

\begin{theorem}
\label{theorem:triangles}
\TheoremTriangles
\end{theorem}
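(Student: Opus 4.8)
The plan is to let each node $v$ keep as its data structure $DS_v$ precisely the set of triangles through $v$, which is the same as knowing every edge $\{u,w\}$ with $u,w\in N(v)$. The guiding principle is that $v$ must never be made to learn the whole of a new neighbor's adjacency list: when an edge $\{v,u\}$ appears, the only new triangles through $v$ are those completed by a node in $N(v)\cap N(u)$, so $v$ should learn exactly these common neighbors and nothing else. Storing only this ``triangle-relevant'' slice of the two-hop neighborhood, rather than the full set of distance-two nodes, is what keeps the cost down and is the reason the near-linear lower bound for maintaining the genuine two-hop neighborhood does not apply.

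I would implement insertions in two communication rounds. In the first round, each endpoint $u$ of a newly inserted edge announces, along every incident edge, the identifier of each new neighbor $w$ (one identifier per edge per round, in parallel over all of $u$'s edges). Any common neighbor $x\in N(u)\cap N(w)$ then receives the identifier $w$ from $u$, and since it already knows $u,w\in N(x)$ it records the triangle $\{u,w,x\}$. In the second round each such $x$ reports the completion back to the two endpoints, sending $w$ to $u$ along $\{x,u\}$ and $u$ to $w$ along $\{x,w\}$; from these messages $u$ learns the hitherto-unknown edge $\{w,x\}$ and $w$ learns $\{u,x\}$, so all three vertices of the new triangle record it. Deletions take a single round: the endpoints of a removed edge $\{u,w\}$ announce the removal to all neighbors, each recipient holding a triangle $\{u,w,x\}$ deletes it, and the two endpoints delete every affected triangle from their own lists without communication. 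A node marks $DS_v$ inconsistent exactly while it is still sending or awaiting such messages and answers queries from its stored list otherwise.

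Correctness reduces to the fact that a triangle appears precisely when the last of its three edges is inserted and disappears precisely when the first is deleted; at that instant the unchanged vertex is a common neighbor of the two endpoints of the modified edge, so it is reached in the first round of the announce-and-report procedure, which then delivers the event to all three vertices. Hence at every consistent moment $DS_v$ holds exactly the triangles through $v$.

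For the complexity, fix a round carrying a batch of $m$ topology changes and let $t_u$ be the number of these changes incident to $u$, so $\sum_u t_u = 2m$. The announcement phase lasts at most $\max_u t_u \le 2m$ rounds, because $u$ needs $t_u$ rounds to push its $t_u$ new identifiers out along each edge. The reporting phase obeys the same bound: the number of messages crossing any one edge $\{x,u\}$ is at most the number of new edges at $u$ whose other endpoint lies in $N(x)$, hence at most $t_u$. Deletions are bounded identically. Thus a batch of $m$ changes leaves some node inconsistent for only $O(m)$ rounds, and since the amortized measure sums inconsistent rounds against the total number of changes (so that batches arriving while earlier ones are still being processed are absorbed into the same sum), the ratio stays $O(1)$. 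I expect the heart of the argument to be exactly this charging: one must guarantee that a single insertion $\{v,u\}$ never costs $v$ a number of rounds proportional to $\deg(u)$, which is what maintaining the true two-hop neighborhood would require. This is secured by announcing to all of $v$'s neighbors in one parallel round regardless of degree and having only the genuine common neighbors answer, so that every identifier ever transmitted can be charged to an incident topology change rather than to a degree.
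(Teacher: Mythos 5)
Your algorithm is close in spirit to the paper's: you avoid learning the full two-hop neighborhood, you let the common neighbor (rather than an endpoint of the new edge) report each completed triangle so that no single link ever carries more than one report per inserted edge, and you control congestion with per-node queues whose contents are charged against incident topology changes. The complexity accounting is essentially the paper's.

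The genuine gap is in the correctness argument for deletions (and, symmetrically, insertions) under congestion. You write that when an edge of a triangle changes, the third vertex ``is reached in the first round of the announce-and-report procedure.'' This is exactly what the amortized model does not guarantee: announcements wait in queues, and while the deletion announcement for $\{u,w\}$ is pending at $u$ and at $w$, the adversary can delete $\{v,u\}$ precisely in the round $u$ finally sends its copy, reinsert it one round later, and do the same to $\{v,w\}$ in the (different) round $w$ sends its copy. Then $v$ receives neither announcement, even though at every moment at least one of its two links into the triangle was alive. This is the paper's central counterexample, and it is the reason the paper introduces imaginary timestamps and the explicit invariant $T^{v,2}_i$. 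Your scheme may in fact survive this attack --- because $v$ purges every triangle through an incident edge the moment that edge is deleted, and the recomputation on reinsertion consults the third vertex's current incidence information rather than $v$'s stale view --- but that rescue is nowhere argued: your correctness paragraph relies on prompt delivery, which is false. A complete proof must anchor each triangle on the latest insertion time among its three edges, use the FIFO order of each queue to show that a stale insertion announcement is always followed (and eventually overridden) by the matching deletion announcement, and supply an explicit mechanism, such as the paper's $\IsEmpty$ flags, by which a node learns that some neighbor still has pending messages. Without the latter, your rule ``inconsistent while awaiting such messages'' is not implementable, since $v$ cannot know how many reports from unknown common neighbors it should still be awaiting before it may stop answering $\inconsistent$.
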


We note that in~\cite{BonneC19}, the bandwidth required for dynamic distributed algorithms that complete in a single round was investigated. It is shown there that the bandwidth for triangle membership listing is $\Theta(1)$ and $\Theta(n^{1/2})$ for edge deletions and insertions, respectively. The model there is very different from ours: it assumes only one type of topology change, and only one change per round, but does not allow an inconsistent state of the data structure. 

Since membership listing is a very strong guarantee, as opposed to detection or listing, we immediately conclude that a constant amortized round complexity also applies to membership listing of any sized clique. In fact, each node knows all cliques to which it belongs.

\newcommand{\CorollaryCliques}
{
There is a deterministic distributed dynamic data structure for $k$-clique membership listing, for any integer $k \geq 3$, which handles edge insertions and deletions in $O(1)$ amortized rounds.
}

\begin{corollary}
\label{corollary:cliques}
\CorollaryCliques
\end{corollary}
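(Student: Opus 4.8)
It follows from Theorem~\ref{theorem:triangles}, which gives a distributed dynamic data structure for triangle membership listing in $O(1)$ amortized rounds. I need to extend this to $k$-clique membership listing for any $k \geq 3$.

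**Key observation:** A $k$-clique containing node $v$ is determined entirely by the edges among $v$'s neighbors (plus $v$). More specifically, if $v$ knows all triangles it belongs to, does it know all $k$-cliques? Let me think about what information $v$ has after the triangle listing.

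Let me plan the proof.
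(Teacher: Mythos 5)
Your proposal stops exactly where the proof needs to begin: you pose the right question (``if $v$ knows all triangles it belongs to, does it know all $k$-cliques?'') but never answer it, so as written there is no argument to check. The missing step is the following observation, which is also the entirety of the paper's proof. Let $H$ be a $k$-clique containing $v$ and let $\{u,w\}$ be any edge of $H$. Either $v\in\{u,w\}$, in which case $v$ knows the edge directly, or $u,w\neq v$, in which case $\{v,u\}$ and $\{v,w\}$ are also edges of $H$ (since $H$ is a clique), so $\{v,u,w\}$ is a triangle containing $v$ and the triangle membership data structure of Theorem~\ref{theorem:triangles} tells $v$ that $\{u,w\}$ is present. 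Hence a consistent $DS_v$ knows \emph{every} edge of every $k$-clique through $v$, and it can answer a query $H=\{v_1,\dots,v_k\}$ with $\true$ iff all $\binom{k}{2}$ triangle/edge queries it induces return $\true$; if $H$ is not a clique, some pair is a non-edge and the underlying structure correctly reports it as absent. No additional communication or state is introduced, so the $O(1)$ amortized bound is inherited verbatim.

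Note also that your opening sentence (``a $k$-clique containing $v$ is determined entirely by the edges among $v$'s neighbors'') gestures at the right fact but is not by itself sufficient: what makes the reduction work is not that the relevant edges lie among $v$'s neighbors (that is true of many subgraphs for which membership listing is provably hard, cf.\ Theorem~\ref{thm:lbmembershiplisting}), but that in a clique every such edge closes a triangle with $v$, which is precisely the information the triangle data structure provides. Completing the argument along these lines would make your proof essentially identical to the paper's.
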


Our techniques do not apply for any subgraph other than cliques, and we show that this is for a good reason. Membership listing of any other subgraph requires a linear number of rounds, even amortized.
\newcommand{\TheoremLBListing}
{
Let $k\geq 3$ be an integer and let $H$ be a $k$-vertex graph which is not the $k$-clique. Then, any deterministic distributed dynamic data structure for $H$ membership listing that handles edge insertions and deletions requires $\Omega\left(\frac{n}{\log n}\right)$ amortized rounds.
}

\begin{theorem}\label{thm:lbmembershiplisting}
\TheoremLBListing
\end{theorem}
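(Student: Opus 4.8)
The plan is to exploit the one structural feature separating a non-clique $H$ from a clique: since $H$ is not complete it has a non-adjacent pair $\{a,b\}$, and this non-edge is exactly what makes membership listing nonlocal. In a clique every vertex of a copy is a neighbor of every other vertex, so a node can read off all its copies from its own incident edges; a non-edge instead forces a copy to contain a vertex that a given node does not see directly, so deciding membership requires information that must cross a low-bandwidth bottleneck. I would turn this into an $\Omega(n/\log n)$ amortized bound by combining an information-theoretic cut argument with a \emph{streaming} construction that forces this bottleneck to be paid $\Theta(n)$ times while spending only $\Theta(n)$ topology changes.

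\textbf{The construction.} Fix a non-edge $\{a,b\}$ of $H$ and set $W=V(H)\setminus\{a,b\}$. First install, once, a fixed gadget on the $|W|=k-2=O(1)$ dedicated nodes realizing $H[W]$, at cost $O(1)$ changes. Reserve a pool of $\Theta(n)$ \emph{anchor} nodes and a disjoint pool of $\Theta(n)$ \emph{victim} nodes, all initially isolated. Join each anchor by its $\deg_H(a)=O(1)$ edges to the images of $N_H(a)$, so it plays the role of $a$; each anchor has $O(1)$ incident edges, so installing all of them costs $\Theta(n)$ changes. Now stream the victims in one at a time: joining a victim by its $\deg_H(b)=O(1)$ edges to the images of $N_H(b)$ makes it play the role of $b$, and together with each anchor $p$ and the fixed gadget it completes a distinct copy of $H$ (the non-edge $a\not\sim b$ is honored because a victim is never joined to an anchor). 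Hence the correct membership list of a freshly arrived victim encodes the entire identity set of the $\Theta(n)$ anchors. (When $a$ and $b$ are non-adjacent twins, e.g.\ the two endpoints of $P_3$, the two pools coincide and each new arrival must learn all previous arrivals; the general two-pool version reduces to this.)

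\textbf{Forcing the cost and amortizing.} The decisive point is that a victim's only connection to the rest of the graph is the constant-size cut formed by its own $O(1)$ incident edges, carrying $O(\log n)$ bits per round, and that a \emph{fresh} victim starts from an isolated state with no cached information to ``diff'' against. I would take the anchor set to be (in the analysis, uniformly random) a $\Theta(n)$-subset of the node universe, of entropy $\Omega(n)$ bits; since answering a query correctly requires the victim to determine this set, a standard counting/Fano argument shows any correct data structure must keep the victim inconsistent for $\Omega(n/\log n)$ rounds before enough bits have crossed its cut. I add the next victim only after the current one stabilizes, so the $\Theta(n)$ learning phases occupy disjoint time intervals; sequential arrival is what prevents the algorithm from broadcasting the anchor set to many victims in parallel. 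Summing, the adversary forces $\Omega(n^2/\log n)$ rounds in which some node is inconsistent against only $\Theta(n)$ topology changes, so at that time the ratio of inconsistent rounds to changes is $\Omega(n/\log n)$. Choosing the anchor identities at random and using linearity of expectation over the $\Theta(n)$ victims yields a single adversarial sequence attaining this total.

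\textbf{The main obstacle.} The heart of the difficulty, and the reason naive reductions collapse to an $O(1/\log n)$ ratio, is amortization against caching: a single expensive recomputation is swamped by the $\Theta(n)$ changes needed to build any $\Theta(n)$-bit instance, and re-querying an already-informed node is cheap because its neighbor can transmit just the small diff. The construction circumvents this by \emph{never re-querying}: each of the $\Theta(n)$ expensive learning events is charged to a brand-new isolated victim, for which no diff is possible, so the $\Omega(n/\log n)$ cost is genuinely paid $\Theta(n)$ times against a $\Theta(n)$-change budget. The remaining points are routine and I would treat them briefly: checking that each anchor--victim pair is a bona fide copy of $H$ (matching every edge, and every non-edge if induced copies are required), observing that interference among nodes sharing the fixed gadget can only \emph{increase} the inconsistency count, so no rigidity of the gadget is actually needed for the lower-bound direction, and making precise the per-victim information bound and its assembly into one sequence via the averaging argument above.
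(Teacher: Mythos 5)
Your proposal is correct and follows essentially the same route as the paper's proof: both exploit a non-edge $\{a,b\}$ of $H$, realize $H[V(H)\setminus\{a,b\}]$ on a fixed $O(1)$-size gadget, and force a sequence of $\Theta(n)$ fresh constant-degree nodes to each learn an $\Omega(n)$-bit set of partner identities across an $O(1)$-edge, $O(\log n)$-bandwidth cut, against only $O(n)$ topology changes in total. The only cosmetic difference is that the paper recycles a single pool (each node $u_\ell$ first plays role $a$, is forced to learn the $\ell-1$ earlier nodes now playing role $b$, and is then rewired to play $b$ itself), whereas you use two disjoint pools with all anchors installed up front; the entropy counting and the amortization are the same.
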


Yet, once we relax the membership requirement, we can again find additional subgraphs extremely fast. For the general non-membership listing variant of 4-cycles and 5-cycles we obtain an $O(1)$ amortized complexity.
\newcommand{\TheoremCycles}
{
There is a deterministic distributed dynamic data structure for $4$-cycle listing and $5$-cycle listing, which handles edge insertions and deletions in $O(1)$ amortized rounds.
}

\begin{theorem}
\label{theorem:cycles}
\TheoremCycles
\end{theorem}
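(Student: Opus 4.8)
The plan is to reduce $4$- and $5$-cycle listing to maintaining a local structure at every node, the \emph{robust $3$-hop neighborhood}, exactly as the triangle and $k$-clique membership results of Theorem~\ref{theorem:triangles} and Corollary~\ref{corollary:cliques} reduce to the robust $2$-hop neighborhood. Informally, each node $v$ stores the set of edges lying within hop distance $3$ of $v$ that it has been \emph{told about} and that are still present; from this local data it enumerates cycles and outputs them with no further communication. Since we ask only for \emph{listing} (each cycle reported by at least one node) rather than membership listing -- the latter being impossible for non-cliques by Theorem~\ref{thm:lbmembershiplisting} -- I will assign each cycle to a single \emph{witness} vertex, so the structure need only guarantee that this one vertex sees the whole cycle. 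The three things to prove are a \emph{maintenance lemma} ($O(1)$ amortized rounds to keep the robust $3$-hop neighborhood current), a \emph{completeness lemma} (every $4$- and $5$-cycle is fully contained in its witness's robust $3$-hop neighborhood), and local recognizability of the witness condition.

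For maintenance, the key is to propagate information by \emph{pushing} each changed edge a constant number of hops, rather than \emph{pulling} a neighbor's entire adjacency list across a newly inserted edge. The pull operation is precisely what makes maintaining the \emph{actual} $2$- or $3$-hop neighborhood cost near-linear amortized time: when an edge $(u,x)$ appears, letting $x$ learn all of $u$'s current neighbors forces $\deg(u)$ identifiers across the single bottleneck edge $(u,x)$, which at $O(\log n)$ bits per round takes $\deg(u)$ rounds and is charged to one insertion. Instead, when an edge changes I have each endpoint broadcast the event to all its neighbors in a single parallel round, and each such neighbor forward it one more hop; a $2$-hop push reaches exactly the edges at hop distance $3$ from a potential witness. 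Because a node sends $O(\log n)$ bits per incident edge per round, if $c$ changes occur in a round then any fixed edge carries only $O(c)$ pieces of forwarded information, costing $O(c)$ rounds of inconsistency; summed over all rounds this gives the claimed $O(1)$ amortized bound. Deletions are pushed symmetrically, causing each affected node to drop the corresponding stored edges.

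The completeness lemma is where I expect the main difficulty. A cycle only needs to be reported at the moment it is \emph{completed}, i.e.\ when its last still-missing edge $e$ is inserted; I assign the cycle to the endpoint of $e$ of smaller identifier and call it the witness $v$. The claim to establish is that at that moment $v$'s robust $3$-hop neighborhood already contains every \emph{other} edge of the cycle, so that the incoming push of $e$ lets $v$ assemble and output it. This requires a case analysis over the (constantly many) shapes and orientations of a $4$- or $5$-cycle relative to $v$ and $e$: one must check that each remaining edge indeed lies within hop distance $3$ of $v$ and, more delicately, that it was pushed to $v$ while $v$ was present and has not since been deleted -- in other words that \emph{robustness} (storing only pushed, still-valid edges) never discards an edge that some witness needs. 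For $4$-cycles every edge sits within hop distance $2$ of an endpoint of $e$, so the robust $2$-hop information already maintained for cliques suffices; the genuinely new case is the $5$-cycle, whose chord between two distance-$2$ vertices is the edge living at hop distance $3$ that forces the third push hop.

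Assembling the pieces: at the end of each round every node, having processed all pushes, reads its now-consistent robust $3$-hop neighborhood, determines locally whether it is the witness of any newly completed cycle, and lists those cycles; deletions retract cycles symmetrically so nothing stale is reported. Correctness of the listing follows from the completeness and witness-assignment lemmas, and the $O(1)$ amortized round complexity follows from the maintenance lemma. The two points I expect to require the most care are the amortized accounting for the extra propagation hop (making sure the second push hop does not secretly reintroduce a pull-like cost) and the completeness case analysis, which must simultaneously cover insertions that complete a cycle and deletions that destroy one, so that a valid witness with the full cycle in view exists for exactly as long as the cycle does.
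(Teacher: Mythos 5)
Your high-level architecture matches the paper's: maintain a temporally restricted subset of the $3$-hop neighborhood by pushing each topology change a constant number of hops, then assemble cycles locally at a witness. But your completeness lemma rests on a witness choice that fails. You assign a cycle to an \emph{endpoint} of its last-inserted edge $e$ and claim that at that moment this endpoint's robust neighborhood already contains all other edges of the cycle. It does not: an edge $f$ on the far side of the cycle was inserted \emph{before} $e$, so when $f$ was pushed, your witness was not yet connected along the relevant path and never received it; a push-based scheme cannot retroactively deliver $f$ without a pull (which is exactly the near-linear-cost operation you correctly rule out). Concretely, for the insertion order $\{v,u\},\{w,x\},\{v,x\},\{u,w\}$ on the $4$-cycle $v-u-w-x-v$, the endpoint $u$ of the last edge $\{u,w\}$ never learns $\{w,x\}$, since $t_{\{w,x\}}<t_{\{u,w\}}$ and $t_{\{w,x\}}<t_{\{v,x\}}$; this is the paper's own motivating example, and it also refutes your side claim that the robust $2$-hop information already suffices for $4$-cycles. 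The fix, which is what the paper does, is to make the witness the node \emph{antipodal} to the edge of maximum insertion time: then that edge is the \emph{farthest} edge on the (one or two) $2$- or $3$-paths from the witness around the cycle, every nearer edge was already in place when the far edge was pushed, and the temporal pattern ``the far edge of the path is newest'' (together with all edges of such a path, including the middle one) is precisely the definition of the robust $3$-hop neighborhood that makes the case analysis go through for both $4$- and $5$-cycles.

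A second, smaller but still substantive gap is your treatment of deletions. ``Deletions are pushed symmetrically, causing each affected node to drop the corresponding stored edges'' does not survive the flickering adversary described in the paper: the edges connecting $v$ to the deleted edge can be deleted and reinserted in exactly the rounds when the deletion notification would have been forwarded, so $v$ either retains a stale edge forever or, if it drops aggressively, violates completeness. The paper's maintenance theorem handles this by storing, for each known edge, the \emph{set of paths} along which it was learned, removing a path when any of its edges is deleted, and discarding the edge only when no supporting path remains; it also needs a two-level emptiness indication ($\IsEmpty$ and $\AreNeighborsEmpty$) so that a node can certify consistency of information that originates three hops away. Your amortized accounting (each change causes $O(1)$ enqueues per queue over its lifetime, one dequeue per round) is essentially the paper's and is fine, but without the corrected witness and the path-multiset deletion mechanism the correctness half of the argument does not close.
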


Finally, we show that these ultra-fast algorithms are barred at $5$-cycles, namely, that listing of $k$-cycles for larger values of $k$ hits a significant lower bound.

\newcommand{\TheoremCycleLB}
{	Any deterministic distributed dynamic data structure for $k$-cycle listing, for any integer $k\geq 6$, handles edge insertions and deletions in $\Omega\left(\frac{\sqrt{n}}{\log n}\right)$ amortized rounds.
}
\begin{theorem}
\label{theorem:cycleLB}
\TheoremCycleLB
\end{theorem}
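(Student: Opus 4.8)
The plan is to prove the bound by a reduction from the two-party communication complexity of set-disjointness, combined with an adversarial update sequence that forces the reduction to recur. First I would fix a bottleneck cut of size $\Theta(\sqrt{n})$ and embed an instance of set-disjointness on $N=\Theta(n)$ coordinates into a graph on $n$ nodes so that a $k$-cycle crosses the cut if and only if the two input sets intersect. Concretely, put $\Theta(\sqrt{n})$ nodes on each side of the cut and encode Alice's input $x$ and Bob's input $y$ as internal bipartite edges on their respective sides, one potential edge per coordinate; then join the two sides through $\Theta(\sqrt{n})$ subdivided connector paths so that, for each coordinate, the unique cycle closing through the connectors has length exactly $k$ and is present exactly when both the corresponding $x$-edge and $y$-edge are present. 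This is where the hypothesis $k\geq 6$ enters: a cut-crossing cycle uses at least two cut edges, and to carry one input-encoding edge from each player plus the connections back to the cut it needs total length at least $6$; for $k\in\{4,5\}$ there is not enough room to place an independent input-edge on both sides, which is exactly consistent with the fact (Theorem~\ref{theorem:cycles}) that those cases are solvable in $O(1)$ amortized rounds, so the threshold is tight. For $k>6$ I would lengthen one connector by a fixed path of $k-6$ edges.

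Next I would simulate the dynamic data structure by the two players: Alice maintains all nodes on the left of the cut and Bob all nodes on the right, and in each simulated round they exchange exactly the messages the algorithm sends across the cut, which is at most $\Theta(\sqrt{n}\log n)$ bits. After the gadget is in place, inserting a single trigger edge that activates the coordinates and then querying a node whether it can report a $k$-cycle reveals whether $x\cap y\neq\emptyset$. Since deciding disjointness on $N=\Theta(n)$ coordinates requires $\Omega(N)=\Omega(n)$ bits of communication, the number of rounds during which the data structure is still updating (inconsistent) after the trigger is at least $\Omega(n)/\Theta(\sqrt{n}\log n)=\Omega(\sqrt{n}/\log n)$, because a consistent node would answer without communication and thereby settle disjointness for free.

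The main obstacle is the amortization, and it is genuinely the crux rather than a formality. A single instance costs $\Theta(n)$ edge insertions to write down $x$ and $y$, which on its own would swamp the $\Theta(\sqrt{n}/\log n)$ forced rounds and give a useless ratio. The plan is therefore to build the reusable scaffolding (the $\Theta(\sqrt{n})$ connector paths and the empty bipartite slot structure) once, and then run a long update sequence that repeatedly re-triggers the disjointness test using only a small number of topology changes per phase, so that over the whole run the ratio of inconsistent rounds to topology changes tends to $\Omega(\sqrt{n}/\log n)$. The delicate point is to guarantee that each phase forces $\Omega(n)$ \emph{fresh} bits across the cut even though only few edges are toggled, i.e.\ to prevent the algorithm from pre-transferring or caching the answer across phases. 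I would arrange this by keeping the cross-cut cycles nonexistent until the phase's trigger edges are inserted, so that before the trigger there is no correctness pressure to communicate and the players provably cannot have learned the relevant coordinates in advance. Making this airtight for a \emph{deterministic} data structure, and ruling out clever amortized precomputation, is the part I expect to require the most care, likely via a direct epoch-style adversary argument rather than a one-shot reduction.
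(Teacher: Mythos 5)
Your one-shot reduction is fine as far as it goes, but the amortization plan --- which you correctly identify as the crux --- has a gap that is not a matter of care but of information theory, and closing it forces a different architecture. In your scheme a single static pair of inputs $x,y$ with total entropy $\Theta(n)$ bits sits on the two sides of one fixed cut of width $\Theta(\sqrt n)$. To get amortized cost $\Omega(\sqrt n/\log n)$ you need $\Omega(n)$ phases, each with $O(1)$ topology changes, each forcing $\Omega(n)$ \emph{fresh} bits across the cut; that is $\Omega(n^2)$ fresh bits in total about inputs that only contain $\Theta(n)$ bits. After the first phase the algorithm may simply ship all of $x$ to Bob's side during the $\Omega(\sqrt n/\log n)$ rounds you have already granted it, cache it, and answer every later trigger with zero further communication; keeping the cycles ``nonexistent until the trigger'' removes the correctness \emph{obligation} to communicate early but does not prevent voluntary precomputation by a deterministic algorithm. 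Refreshing $x,y$ between phases restores freshness but costs $\Theta(n)$ changes per phase and destroys the ratio. So the scheme cannot work with one cut and one instance.

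The paper escapes this by fragmenting both the information and the cut: it builds $t=\Theta(\sqrt n)$ disjoint components $C_1,\dots,C_t$, each encoding only $\Theta(\sqrt n)$ private bits (which $2D/3$ of $D$ leaves are attached to $u_\ell^1$), written once for $O(n)$ total changes. It then runs $\Theta(t^2)=\Theta(n)$ pairings: each pairing connects $C_\ell$ to $C_m$ by exactly \emph{two} edges, waits, and disconnects, costing $O(1)$ changes. A counting argument over the $\binom{D}{2D/3}$ configurations (rather than a disjointness reduction) shows each pairing forces $\Omega(\sqrt n)$ bits over those two $O(\log n)$-bit edges, i.e.\ $\Omega(\sqrt n/\log n)$ rounds. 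Freshness is guaranteed structurally: a new partner $C_m$ is a different recipient each time, and the schedule ensures no third component can relay what it learned earlier (whenever $C_{m'}$ knew something about $C_\ell$ it was no longer adjacent to $C_m$); a telescoping bound on the per-edge bit counts $I_{m,\ell}$ handles the residual caching within a single iteration. The total extracted information is $\Theta(n^{1.5})$ bits, which is consistent precisely because each component's $\Theta(\sqrt n)$ bits must be re-delivered to $\Theta(\sqrt n)$ distinct, mutually isolated partners. To repair your proof you would need to adopt essentially this many-components/many-narrow-cuts structure; the single-cut disjointness framing cannot be patched.
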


\subsection{The challenges and our techniques}
To explain the main challenge in triangle membership listing, as a warm-up, consider a triangle $\{v,u,w\}$. This triangle lies within the 2-hop neighborhood of all of its nodes, where we define the $k$-hop neighborhood of a node $v$ to be the set of edges whose endpoints are within distance $k$ from $v$. A naive approach would thus be for all nodes to learn their 2-hop neighborhood in order to list all triangles which they belong to. However, we prove in Section~\ref{section:cliques} that learning the 2-hop neighborhood of a node is too expensive, requiring a near-linear number of rounds, even in an amortized complexity measure. Thus, the naive algorithm for this problem is prohibitively slow. 

~\\\textbf{Warm-up: robust 2-hop neighborhoods:} 
Despite not being able to learn the entire 2-hop neighborhood, we identify a subset of the 2-hop neighborhood, which we term \emph{the robust 2-hop neighborhood}, which we prove can be maintained by the data structure at each node within $O(1)$ amortized rounds (we give this as a warm-up in Appendix~\ref{section:robust}). This subset of edges consists of all edges adjacent to $v$, and every remaining edge $\{u,w\}$ in the $2$-hop neighborhood of $v$ that is inserted \emph{after} at least one of the edges $\{v,u\}$ and $\{v,w\}$. At a first glance, this task may seem easy: with every insertion of an edge $e=\{v,u\}$, each of its endpoints $v$ enqueues $e$ and sends it to every neighbor $w$ when dequeued (the queue is needed to adhere to the bandwidth restrictions).

However, this is insufficient because the graph may also undergo edge deletions. To illustrate a bad case, let $\{v,u,w\}$ be a triangle in the graph which $v$ has knowledge of. Now, suppose the edge $\{u,w\}$ is deleted from the graph. Let $i_u$ and $i_w$ be the rounds in which $u$ and $w$ tell their neighbors about the deletion of $\{u,w\}$, respectively, and notice that these can be delayed due to congestion caused by previous information that $u$ and $w$ need to send. In particular, assume, crucially, that $i_w\neq i_u$. Now, suppose the edge $\{v,u\}$ is deleted in round $i_u$, and the edge $\{v,w\}$ is deleted in round $i_w$, and both are immediately inserted back in the following rounds. This causes $v$ to never learn about the deletion of $\{u,w\}$ but $\{u,w\}$ will still be marked as existing according to $v$, since at least one of the edges $\{v,u\}$, $\{v,w\}$ was present in every round. Thus this algorithm fails.

Therefore, to fix this discrepancy, we need to make $v$ remove the edge $\{u,w\}$ from the set of edges that it knows about, also in the first case. We overcome this issue by assigning timestamps to edge insertions, and requiring that $v$ \emph{forgets} about the edge $\{u,w\}$ if its insertion time becomes smaller than both insertion times of $\{v,u\}$ and $\{v,w\}$. This solves the bad scenario above, as $v$ will forget the edge $\{u,w\}$ once $\{v,w\}$ is deleted. A major problem here is that insertion times grow unboundedly, preventing us from sending them in messages. We overcome this by having a node $v$ that receives an edge $\{u,w\}$ through its neighbor $u$ assign an imaginary timestamp $\{u,w\}$ that is equal to the insertion time of $\{v,u\}$, rather than that of $\{u,w\}$ itself, as the latter is unknown to $v$, and we prove that these imaginary timestamps are sufficient.

~\\\textbf{Triangle membership listing:} 
Having the robust 2-hop neighborhood is very useful for listing triangles, but care should be taken. Suppose the edges of the  triangle $\{v,u,w\}$ are inserted in different rounds, in the order $\{v,u\}$, $\{u,w\}$, $\{v,w\}$. Then nodes $v$ and $u$ know about this triangle because it is contained in their robust 2-hop neighborhoods. Yet, the node $w$ does not know about the edge $\{v,u\}$. 

To address this, we would like to let $v$ inform $w$ about the triangle created by insertion of the edge $\{v,w\}$. This raises a subtlety: if $v$ is responsible for informing $w$ about $\{v,u,w\}$, it is also responsible for informing $w$ about any other triangle $\{v,u',w\}$ that the insertion of the edge $\{v,w\}$ creates, but there could be a linear number of such triangles which would cause a huge congestion on the communication link $\{v,w\}$. Instead, we let the node $u$ inform $w$ of this triangle, noticing that each such $u$ only sends one indication of a triangle to $w$ per the inserted edge $\{v,w\}$. Note that we incur here another round towards the amortized round complexity, because $u$ only knows about the edge $\{v,w\}$ after at least another round since it needs to obtain this information from $v$ or $w$. But for the amortization argument this will turn out to be sufficient (see Section~\ref{section:cliques}).

~\\\textbf{Robust 3-hop neighborhoods and listing 4-cycles and 5-cycles:} 
Some 4-cycles can be listed using the robust 2-hop neighborhood, e.g., if $v$ is a node on the cycle $v-u-w-x$ whose edges $\{v,u\}$ and $\{v,x\}$ are inserted before the other two edges. However, once we consider an order of insertions that is $\{v,u\}$, $\{w,x\}$, $\{v,x\}$, and $\{u,w\}$, the 4-cycle is not contained in the robust 2-hop neighborhood of any of its nodes. For 5-cycles, the above is always the case, regardless of the order of insertions.

Thus, in order to list 4-cycles and 5-cycles, we need to know more edges. For this, we identify a subset of the 3-hop neighborhood of $v$, which we term \emph{the robust 3-hop neighborhood}, for which we can prove that (i) $v$ can maintain knowledge of this subset within $O(1)$ amortized rounds (Section~\ref{section:2hop}), and (ii) this subset is sufficient for constructing a data structure for $4$-cycle and $5$-cycle listing (Section~\ref{section:cycles}). We note that this includes the robust 2-hop neighborhood. One could also consider defining this for larger hops, but note that listing of larger cycles admits a significant lower bound, as we prove in Section \ref{section:cycleLB}.

Roughly speaking, this subset of the 3-hop neighborhood of a node $v$ will consist of edges on 2,3-paths towards $v$ such that the farthest edge is inserted after the other(s). It will not be hard to show that with insertions only, the knowledge of this subset of edges can be efficiently transmitted to $v$. When we allow edge deletions, we require the deletion to be propagated to distance $2$. 

While this approach sounds straightforward, several problems still remain. First, it might make the knowledge that $v$ maintains be a disconnected graph, which is a pitfall we wish to avoid, since $v$ can never faithfully maintain information about unreachable components in the graph. Consequently, one could require that $v$ simply forgets about now unreachable components. However, we argue that this is not sufficient.
Indeed, this problem appears already for distance $1$ as described in the $2$-hop case, where in a triangle, the far edge from $v$ is deleted, but this information fails to reach $v$ due to flickering of the two edges of the triangle that touch $v$. When limiting ourselves to the $2$-hop case, we overcame this by using (imaginary) timestamps for edge insertions. For the $3$-hop case this is insufficient, and instead we employ a more involved mechanism of maintaining a set of paths.

Formally, we have $v$ maintain for each edge $\{w,x\}$ \emph{a set of paths on which the edge was learned}. If $\{w,x\}$ is rediscovered on a new path, this path is added into the set, and if, alternatively, a path is severed due to an edge deletion, this path is removed from the set. If no path remains for the edge $\{w,x\}$, only then is it marked as not existing. We claim that this algorithm is sufficient for a node $v$ to learn the required subset of its 3-hop neighborhood even when allowing edge insertions and deletions. The proof is highly non-trivial, especially due to the need to argue about the amortized round complexity. To get a flavor of this, note that a similar approach of using the \emph{robust $4$-hop neighborhood listing} in order to obtain $6$-cycle listing is doomed to fail, given our lower bound of Section~\ref{section:cycleLB}.

\subsection{Additional related work}
There are two previous works that address the recently emerging highly dynamic setting that we address. Pioneering this area was~\cite{BambergerKM18}, which studied the complexity of packing and covering problems in this setting, and this was followed by the work in~\cite{CDKPS}, which addresses maximal matching, coloring, maximal independent sets, and 2-approximations for weighted vertex cover. Our results for subgraphs are the first in this model.

Many additional models of dynamic distributed computing have been extensively studied throughout the years. A prime example is the literature about self stabilization~\cite{Dolev2000}, which also addresses a notion of \emph{quiet} rounds. The highly dynamic setting considered in our work does not rely on any quiet time in the network. A very harsh model that allows the graph to almost completely change from round to round is that of~\cite{Kuhn2010} (see also follow-up work), but then the questions addressed have a flavor of information dissemination rather than graph properties. Additional work that addresses graph structures in distributed dynamic settings include the aforementioned clique detection work~\cite{BonneC19}, whose setting differs from ours by not allowing inconsistency responses but sometimes reverting to an increased bandwidth. The latter also assume only a single topology changes per round. Studies by~\cite{KonigW13, Censor-HillelHK16, AssadiOSS18, Solomon16, ParterPS16} assume enough time for the network to produce a response after a topology change, and some of these works also allow a large bandwidth. In~\cite{AugustinePRU12,AugustineP0RU15,AugustineMMPRU13}, a highly dynamic model for peer-to-peer networks is studied, in which the graphs at each time must be bounded-degree expanders, but heavy churn (rate of peers joining and leaving) is allowed. 

To contrast our results about subgraphs with the static $\congest$ model, note that the round complexity of triangle membership listing is $\tilde{\Theta}(n)$ \cite{Izumi2017}. The complexity of triangle listing (every triangle needs to be known by some node, but not necessarily by all of its nodes) is $\tilde\Theta(n^{1/3})$, due to the upper bound of~\cite{ChangS19} (obtained after a series of papers that were able to show that it is sublinear~\cite{Izumi2017,Chang2019}) and the lower bounds of~\cite{Izumi2017, Pandurangan0S18}. If we consider graphs with bounded maximum degree $\Delta$, then the complexity of triangle listing is $O(\Delta/\log n+\log\log\Delta)$ \cite{HuangPZZ20}. In particular, this is superior to the previous complexity of $\tilde\Theta(n^{1/3})$ whenever $\Delta=\tilde O(n^{1/3})$. Furthermore, restricted to \emph{deterministic} algorithms, the complexity was recently shown to be $n^{2/3+o(1)}$ \cite{ChangS20}. The complexity of triangle detection (some node needs to indicate that there is a triangle) is $\tilde{O}(n^{1/3})$ as listing, and the lower bound front is very scarce: it is known that a single round is insufficient~\cite{Abboud2017} even for randomized algorithms~\cite{Fischer2018}. For deterministic algorithms, the complexity was recently shown to be $n^{1-1/\omega+o(1)}<O(n^{0.58})$ \cite{ChangS20}, where $\omega$ is the matrix multiplication exponent.
In the $\congest$ model, listing cliques of size $k$ admits a lower bound of $\tilde{\Omega}(n^{k-2/k})$ \cite{Fischer2018} (for $k=4$ this also follows from~\cite{Czumaj2018}), and upper bounds of $O(n^{5/6+o(1)})$ and $O(n^{21/22+o(1)})$ are known for listing of 4-cliques and 5-cliques, respectively~\cite{EdenFFKO19}. Recently, sublinear running times were reported for listing all $k$-cliques~\cite{CHLL20}. 
Membership listing has a linear complexity as is the case for triangles.
Listing $4$-cycles requires a linear number of rounds~\cite{EdenFFKO19}, but the complexity of detecting $4$-cycles is $\Theta(n^{1/2})$, given in~\cite{Drucker2014}, and the complexities of detecting larger cycles are also sublinear but known to be polynomial in $n$ \cite{EdenFFKO19}.

\section{Clique membership listing in $O(1)$ amortized complexity}
\label{section:cliques}

The core of showing clique membership listing is showing membership listing for triangles. The problem of \emph{triangle membership listing} requires the data structure $DS_v$ at each node $v$ to respond to a query of the form $\{v,u,w\}$ with an answer $\true$ if this set forms a triangle, $\false$ if it does not, or $\inconsistent$, if $DS_v$ is in an inconsistent state. Recall that the node $v$ is not allowed to use any communication for deciding on its response.

Note that the above would be a trivial task if large messages were available, by simply having each node send its entire neighborhood after each topology change.

\begin{theorem-repeat}{theorem:triangles}
\TheoremTriangles
\end{theorem-repeat}

Let $E^{v,r}_i$ denote the subset of $E_i$ of all edges contained in the $r$-hop neighborhood of $v$. Consider first only the case $r=2$, for which $E^{v,2}_i$ is the set of edges that touch the node $v$ or any of its neighbors. One would like to have each node $v$ learn all of $E^{v,2}_i$ but this turns out to be a hard task, as we show later in Corollary \ref{cor:lb1hop}\footnote{Furthermore, in Appendix \ref{appendix:1hop} we show an optimal algorithm for 2-hop neighborhood listing.}. Instead, in Theorem \ref{theorem:N1} in Appendix~\ref{section:robust}, which we refer to for a warm-up, it is shown that a node $v$ may learn a subset of $E_{i}^{v,2}$ (the \emph{robust 2-hop neighborhood}), subject to insertion time constraints. We wish to extend this notion and ask which other subsets of $E_{i}^{v,2}$ (or more generally $E_{i}^{v,r}$)  that are subject to insertion time constraints a node $v$ can maintain knowledge of. We will refer to these sets of edges with insertion time constraints as \emph{temporal edge patterns}. In Figure \ref{fig:teptriangle} we present temporal edge patterns the knowledge of which a node $v$ maintains in the proof of Theorem \ref{theorem:triangles}. These patterns, together with $N_v$, are sufficient for triangle membership listing.

\begin{figure}[h]
	\begin{center}
		\includegraphics[clip, scale=1]{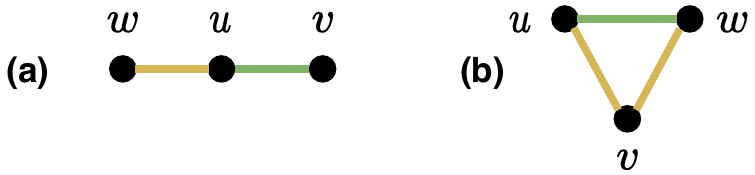}
	\end{center}
	\caption{\label{fig:teptriangle}The temporal edge patterns a node $v$ learns in the proof of Theorem \ref{theorem:triangles}. These patterns consist of \textbf{(a)} the robust $2$-hop neighborhood where $t_{\{u,w\}}\geq t_{\{v,u\}}$, and  \textbf{(b)} a pattern where $t_{\{u,w\}}< t_{\{v,u\}},t_{\{v,w\}}$. Note that patterns \textbf{(a)} and \textbf{(b)} are mutually disjoint. Green colored edges have earlier insertion time compared to yellow colored edges.}
\end{figure}

\begin{proof}[Proof of Theorem \ref{theorem:triangles}] 
	
Denote by $T_{i}^{v,2}\subseteq E_{i}^{v,2}$ the set of edges which satisfy the time constraints given in Figure \ref{fig:teptriangle}, namely, if the edges $\{v,u\},\{u,w\}$ exist then we have $\{v,u\},\{u,w\}\in T_{i}^{v,2}$ if and only if either \textbf{(a)} $t_{\{u,w\}}\geq t_{\{v,u\}}$, or \textbf{(b)} the edge $\{v,w\}$ exists and $t_{\{u,w\}}< t_{\{v,u\}},t_{\{v,w\}}$. Furthermore, we include all the edges incident at $v$ into $T_i^{v,2}$, that is, $E^{v,1}_i\subseteq T_i^{v,2}$. The data structure $DS_{v,i}$ at node $v$ at the end of round $i$ consists of the following: A set $S_{v,i}$ of items, where each item is an edge $e=\{a,b\}$ along with a timestamp $t'_e$, a queue $Q_{v,i}$ of items, each of which is either 

\begin{enumerate}
	\item an edge $e=\{u,w\}$ along with an insertion/deletion mark and a mark that this item is related to temporal edge pattern \textbf{(a)};
	\item a tuple of an edge and a vertex $\langle\{u,w\},x\rangle$ along with an insertion/deletion mark and a mark that this item is related to temporal edge pattern \textbf{(b)},
\end{enumerate}
and a flag $C_{v,i}$. We make a distinction between $t_e$ which is the true timestamp when $e$ was added and $t'_e$ which is the imaginary timestamp maintained inside $S_{v,i}$. Note that for every $e$ adjacent to $v$, the node $v$ knows the value $t_e$, while for non adjacent edges we only know $t'_e$.
	We will make sure that an edge $\{a,b\}$ appears in at most a single item in $S_{v,i}$. Our goal is to maintain that $T^{v,2}_i= S_{v,i} $ at the end of round $i$, or that $C_{v,i}=\false$ (the consistency flag). 
	
	Initially, for all nodes $v$, we have that $S_{v,0}$ and $Q_{v,0}$ are empty, and $C_{v,0}=\true$, indicating that $DS_{v,0}$ is consistent.
	The algorithm for a node $v$ in round $i\geq 1$ is as follows:
	\begin{enumerate}
		\item \label{edge:init2} Initialization: Set $S_{v,i} = S_{v,i-1}$ and $Q_{v,i} = Q_{v,i-1}$. 
		
		\item \label{edge:chng2} Topology changes: Upon indications of edge deletions, for each such deletion $\{v,u\}$, the edge $\{v,u\}$ is removed from $S_{v,i}$. 
		Then, for each such deletion $\{v,u\}$, all edges $\{\{u,z\}\in S_{v,i} \mid \text{either }\{v,z\} \not\in S_{v,i} \text{ or }t'_{\{u,z\}} < t_{\{v,z\}} \}$ are removed from $S_{v,i}$. 
		Afterwards, upon an indication of an edge insertion $\{v,u\}$, the edge $\{v,u\}$ is added to the set $S_{v,i}$.
		In both cases (insertion and deletion), the pair $\{v,u\}$ is enqueued into $Q_{v,i}$ along with a corresponding insertion/deletion mark and mark \textbf{(a)}. 
		
		\item \label{edge:comm2} Communication: 
		If $Q_{v,i}$ is not empty, the node $v$ dequeues an item $e$ from $Q_{v,i}$. If the item has mark \textbf{(a)}, $v$ sends it with mark \textbf{(a)} to all of its neighbors $u$ such that $t_e \geq t_{\{ v,u\}}$. Alternatively, if the item has mark \textbf{(b)}, it has the form $\langle\{v,u\},w\rangle$, in which case we send $\{v,u\}$ to $w$ with mark \textbf{(b)}.
		In any case $v$ also sends to all its neighbors a Boolean indication $\IsEmpty$ of whether $Q_{v,i}$ was empty in the beginning of round $i$. In actuality, we do not send $\IsEmpty=\true$: not receiving $\IsEmpty=\false$ by other nodes is interpreted as receiving $\IsEmpty=\true$.
		
		\item \label{edge:updt2} Updating the data structure: Upon receiving an item $e=\{u,w\}$ with mark \textbf{(a)} from a neighbor $u$, node $v$ sets updates $S_{v,i}$ according to the insertion/deletion mark. Furthermore, for the case of insertion, if $e\notin S_{v,i-1}$ we set $t'_e=t_{\{u,v\}}$. Otherwise we update $t'_e=\max \{t'_e, t_{\{u,v\}}\}$. If we further detect that  $t_{\{v,u\}}<t_{\{v,w\}}\leq t'_e$ or $t_{\{v,w\}}< t_{\{v,u\}}\leq t'_e$, we enqueue $\langle\{v,u\},w\rangle$ or $\langle\{v,w\},u\rangle$, respectively, into $Q_{v,i}$ with mark \textbf{(b)}. 
		
		Alternatively, upon receiving an item $e=\{u,w\}$ with mark \textbf{(b)} from a neighbor $u$, only if the edges $\{v,u\},\{v,w\}$ exist, $v$ inserts it into $S_{v,i}$ and sets $t'_e=\min\{t_{\{v,u\}},t_{\{v,w\}}\}-1$.
		
	If $Q_{v,i}$ is not empty, or an item with $\IsEmpty=\false$ is received, then $C_{v,i}$ is set to $\false$. Otherwise, $C_{v,i}$ is set to $\true$.
	\end{enumerate}
	~\\\textbf{Correctness:} Suppose that $C_{v,i} = \true$ and $DS_{v,i}$ is queried with $\{v,u,w\}$. We need to show that $DS_{v,i}$ responds $\true$ if and only if this triplet is a triangle in the graph $G_i$. We will instead show that we implement a data structure such that it responds $\true$ on a query $\{u,w\}$ if and only if $\{u,w\}\in S_{v,i}$ where we promise $T_{i}^{v,2}=S_{v,i}$. This is sufficient, since $\{v,u,w\}$ is a triangle in $G_i$ if and only if $\{v,u\},\{u,w\},\{v,w\}\in T_{i}^{v,2}$. Now, we first prove correctness for the ideal algorithm, where we set $t'_e = t_e$ (the true edge timestamp). Then, we show that our algorithm behaves exactly the same as the ideal algorithm, even with the modified timestamps.
	
	To show the correctness of the ideal algorithm, we want to show that if $C_{v,i}=\true$, it holds that $S_{v,i}=T^{v,2}_i$. Let $e=\{u,z\}\in T^{v,2}_i$, and we show that if $C_{v,i}=\true$ then $e$ is also in $S_{v,i}$. If $v\in e$ the proof is direct, so we focus on the case where $v \notin e$. Because $e\in T^{v,2}_i$ either \textbf{(a)} there must exist an edge $e'=\{u,v\} \in T^{v,2}_i$ such that $t_{e} \geq t_{e'}$ or \textbf{(b)} both edges $\{u,v\},\{z,v\}$ exist and $t_e<t_{\{u,v\}},t_{\{z,v\}}$. In case \textbf{(a)}, the edge $e'$ was added at the same round or before $e$, and remained there until the $i$-th round. As the $C_{v,i}=\true$ implies that the queue of $u$ is empty, this means that $v$ must have received the edge $e'$ by iteration $i$ and added it to $S_{v,i}$. As $e'$ was unchanged throughout this time, this implies that the edge $e$ was not deleted from $S_{v,i}$, up to and including round $i$. In case \textbf{(b)}, the edge $e$ was inserted strictly before $\{v,u\}$ and $\{v,z\}$, and all three exist until round $i$. Suppose, without loss of generality, that $t_{\{v,z\}}\geq t_{\{v,u\}}>t_e$. In this case, by step \ref{edge:updt2}, after learning about $\{v,z\}$, $u$ should enqueue an element $\langle e,v\rangle$ which then by step \ref{edge:comm2} implies that $v$ should eventually learn of $e$. We first show that $u$ learns about $\{v,z\}$. Indeed, both $v$ and $z$ enqueue $\{v,z\}$ and as $C_{v,i}=\true$ implies these queues are empty we conclude $u$ learns about that edge. Note also that $u$ learns about $\{v,z\}$ when $\{u,v\}$ and $\{u,z\}$ are already present. Next, as all the queues around $v$ are eventually empty, $u$ detects that $t_{\{v,z\}}\geq t_{\{v,u\}}>t_e$, which causes it to enqueue $\langle e,v\rangle$. Thus $e$ is eventually sent to $v$ by $u$, and so $e\in S_{v,i}$.
	
Next, let $e=\{u,z\}\notin T^{v,2}_i$, and we show that if $C_{v,i}=\true$ then $e$ is also  not in $S_{v,i}$. Here again the case where $v\in e$ is trivial. If $e$ was never added to $S_{v,j}$ at some round $j<i$ we have $e\notin S_{v,i}$ and we are done. Thus, we are only concerned with the case where $e$ was added to $S_{v,j}$ for $j<i$, but is not in $T^{v,2}_i$. This implies that there exists no $e'\in T^{v,2}_i$ such that $e\cap e' \neq \emptyset$ and $t_{e} \geq t_{e'}$. By step~\ref{edge:chng2} of our algorithm, this implies that $e$ must be deleted by the $i$-th iteration. Furthermore, by the same reasoning, it cannot be that both edges $\{v,u\},\{v,z\}$ exist and $t_e<t_{\{v,u\}},t_{\{v,z\}}$. If $e$ was received by $v$ when both edges exist, by step~\ref{edge:chng2} of our algorithm, this implies that $e$ must be deleted by the $i$-th iteration. If, alternatively, $e$ was received by $v$ when one of $\{v,u\},\{v,z\}$ did not exist already, then by step~\ref{edge:updt2} it would not be included into $S_{v,i}$ in the first place. This completes the proof for the ideal algorithm.
	
	We need to show that the set $S_{v,i}$ maintained by our algorithm is the same as it would be if the value of $t'_e$ was set to $t_e$ (the ideal algorithm). We observe that exact timestamps are not required -- we only need to know how they relate to one another to be able to decide if edges belong to a certain temporal edge pattern or not. This is made sure in step~\ref{edge:updt2}. Indeed, notice that in step 1 we have $t'_{\{u,z\}} < t_{\{v,z\}}$ if and only if $t_{\{u,z\}} < t_{\{v,z\}}$, and in step 4 we have $t_{\{v,u\}}<t_{\{v,w\}}\leq t'_e$ if and only if $t_{\{v,u\}}<t_{\{v,w\}}\leq t_e$ and $t_{\{v,w\}}< t_{\{v,u\}}\leq t'_e$ if and only if $t_{\{v,w\}}< t_{\{v,u\}}\leq t_e$. Therefore the algorithm operates in the same manner as the ideal algorithm.
	
	~\\\textbf{Round complexity:}
	A topology change in the pair $e=\{w,u\}$ in round $i$ causes an enqueue of an item to $Q_{u,i}$ and to $Q_{w,i}$ with mark \textbf{(a)}. For nodes $v_1,\ldots,v_\ell$ which are neighbors of both $u$ and $w$ this can further cause an enqueue, at most twice (at rounds $i_u,i_w>i$), of an item with mark \textbf{(b)} to $Q_{v_1},\ldots,Q_{v_\ell}$, respectively. Since $e$ caused an enqueue on at most $3$ rounds and nodes dequeue a single element every round we then have, for every round $j$, that the number of rounds in which there exists at least one node $v$ with an inconsistent $DS_v$ until round $j$ is bounded by $3$ times the number of topology changes which occurred until round $j$. %Furthermore, by definition of the algorithm, whenever nodes are consistent they no longer communicate. 
	This gives the claimed $O(1)$ amortized round complexity.
\end{proof}

\iffalse
\begin{remark}
\label{remark:triangles}
One major difference between this algorithm and the algorithms of Censor-Hillel et al.~\cite{CDKPS} is that the latter has a nonempty subset of the nodes that make progress in each epoch (batch of 5 rounds), and progress means that all nodes in the set (at least one) become consistent. In our algorithm, \emph{all} nodes with inconsistent data structures make progress, but progress only means dequeueing an element from their queues (additional elements could be enqueued), so it is possible that even if no additional topology change occurs in a certain round, still no node becomes consistent yet. However, because every topology change adds at most a single item to each queue, we have that the size of each queue is at most the number of topology changes.

Also, the actual progress of a queue $Q_v$ towards becoming empty is only affected by the topology changes in $E^{v,1}_i$, because these are the only edges that are enqueued in it.\footnote{Observe that it is tempting to say that progress in the \emph{knowledge} of a node is also local in the sense that it is only affected by the topology changes in $E^{v,1}_i$, but this is not true because the queue of a neighbor is influenced by receiving edges from the queues of its neighbors and so on. That is, it could be that $Q_v$ is empty and $\{v,w\},\{v,u\}$ are in $S_v$, but $\{w,u\}$ is not in $S_v$ despite $\{w,u\}$ being present in the graph, and it could take some time for $v$ to receive this edge $\{w,u\}$.}
\end{remark}
\fi

Given an integer $k \geq 3$, the \emph{$k$-clique membership listing} problem is the natural generalization of the triangle membership listing problem, in which the query to $DS_v$ is a $k$-sized set $H=\{v_1,\dots,v_k\}$, where there is an $1\leq i\leq k$ such that $v_i=v$. Now observe that triangle membership listing implies $k$-clique membership listing, because for any $k$-clique $H$ with a node $v$, if $v$ knows about all triangles of which it is a member, then it knows about all edges in $H$. We thus immediately obtain the following.

\begin{corollary-repeat}{corollary:cliques}
\CorollaryCliques
\end{corollary-repeat}

Finally, we show that $k$-cliques are essentially the only subgraphs for which we can handle membership listing efficiently.

\begin{theorem-repeat}{thm:lbmembershiplisting}
\TheoremLBListing
\end{theorem-repeat}

\begin{proof}
	Let $a$ and $b$ be two vertices in $H$ which are not neighbors, and denote by $N_a$ and $N_b$ their neighborhoods in $H$, respectively. Furthermore, denote by $v_1,v_2,\ldots,v_{k-2}$ all the other vertices of $H$. We will consider the counter example on $n$ nodes where we have some nodes $v_1,v_2,\ldots,v_{k-2}$ which are connected according to $H$. The adversary performs the following steps for $\ell=1,\ldots,t$:
	\begin{enumerate}
		\item Choose a node $u_\ell$ arbitrarily which is different from $v_1,v_2,\ldots,v_{k-2}$ and $u_1,u_2,\ldots,u_{\ell-1}$.
		\item Connect $u_\ell$ to $v_1,v_2,\ldots,v_{k-2}$ according to $N_a$.
		\item Wait for the algorithm to stabilize.
		\item Disconnect $u_\ell$ from all nodes and connect it again according to $N_b$.
 
	\end{enumerate}
	We claim that no algorithm can handle this scenario with $o\left(\frac{n}{\log n}\right)$ amortized round complexity. Indeed, consider the $\ell$th node $u_\ell$ that we connect according to $N_a$. Due to an indistinguishability argument, there are at least $\binom{n-k+1}{\ell-1}$ possible $H$-graphs that $u_\ell$ may form with the nodes $v_1,v_2,\ldots,v_{k-2}$ and each possible selection of $u_1,u_2,\ldots,u_{\ell-1}$. Therefore, at least $\log \binom{n-k+1}{\ell-1}$ bits need to be communicated on the existing edges out of $\{u_\ell,v_1\},\ldots,\{u_\ell,v_{k-2}\}$. Suppose we continue this procedure up to $t=1+\frac{n-k+1}{2}$. Then, the total communication turns out to be at least
	\begin{align}
	\sum_{\ell=1}^{1+\frac{n-k+1}{2}}\log \binom{n-k+1}{\ell-1}&\geq\sum_{\ell=1+\frac{n-k+1}{3}}^{1+\frac{n-k+1}{2}}\log \binom{n-k+1}{\ell-1}
	\geq \sum_{\ell=1+\frac{n-k+1}{3}}^{1+\frac{n-k+1}{2}}\log \left(\frac{n-k+1}{\ell-1}\right)^{\ell-1}\nonumber\\
	&\geq\frac{n-k+1}{6}\cdot\log 3^{\frac{n-k+1}{3}}
	=\Omega(n^2).\nonumber
	\end{align}
	While the communication happens on $O(kn)=O(n)$ different edges, it happens sequentially on at most $O(k)=O(1)$ edges at a time, in step 3, each having $O(\log n)$ bandwidth. Thus we argue this scenario requires at least $\Omega\left(\frac{n^2}{\log n}\right)$ inconsistent rounds. As there are only $O(kn)=O(n)$ topological changes, the amortized round complexity is at least $\Omega\left(\frac{n}{\log n}\right)$.
\end{proof}

Note that maintaining knowledge of the 2-hop neighborhood is nothing else than membership listing of the 3-vertex path graph, $v-u-w$. We deduce the following.

\begin{corollary}\label{cor:lb1hop}
	Any deterministic distributed dynamic data structure for $2$-hop neighborhood listing handles edge insertions/deletions in $\Omega\left(\frac{n}{\log n}\right)$ amortized rounds.
\end{corollary}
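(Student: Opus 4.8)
The plan is to obtain the bound directly from Theorem~\ref{thm:lbmembershiplisting}, exploiting the observation (stated just above the corollary) that maintaining the 2-hop neighborhood is the same problem as membership listing of the three-vertex path $P_3 = v-u-w$. First I would pin this equivalence down as a reduction. Suppose a data structure maintains, at every node $v$, the 2-hop neighborhood $E^{v,2}_i$; in particular $v$ then knows the edges of $N_v$ and every edge $\{u,w\}$ with an endpoint in $N_v$. I claim $v$ can answer any $P_3$-membership query $\{v,u,w\}$ with no communication: in any copy of $P_3$ on $\{v,u,w\}$ the vertex $v$ has degree at least one, hence is adjacent to $u$ or to $w$, so whenever such a copy can exist the relevant middle edge ($\{u,w\}$, $\{v,w\}$, or $\{v,u\}$) is incident to $v$ or to a neighbor of $v$ and therefore lies in $E^{v,2}_i$; otherwise $v$ already sees from $N_v$ that the answer is \false. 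Thus any 2-hop listing structure also solves $P_3$-membership listing. (The converse holds too, since querying $\{v,u,w\}$ for $u\in N_v$ reveals whether $\{u,w\}\in E_i$, but only this direction is needed for the lower bound.)

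Next I would verify that the reduction is faithful to the amortized complexity measure. The derived $P_3$-membership structure returns $\inconsistent$ exactly when the underlying 2-hop structure does, and otherwise returns a correct answer computed locally from the stored 2-hop neighborhood without any further communication. Consequently, for every prefix of the execution, both the set of rounds in which some node is inconsistent and the number of topology changes coincide for the two structures, so the amortized round complexity of 2-hop neighborhood listing is at least that of $P_3$-membership listing.

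Finally, since $P_3$ is a $3$-vertex graph that is not the $3$-clique, applying Theorem~\ref{thm:lbmembershiplisting} with $k=3$ and $H=P_3$ yields the $\Omega(n/\log n)$ amortized lower bound for $P_3$-membership listing, which the reduction transfers to 2-hop neighborhood listing. I expect the only delicate point to be the bookkeeping in the first step: confirming that every edge needed to decide a $P_3$-membership query at $v$ genuinely lies in $E^{v,2}_i$ (so the answer is forced without communication), and that the inconsistency schedules of the two structures agree round by round. Once this is checked, the corollary follows with no additional work.
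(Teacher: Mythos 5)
Your proposal is correct and follows exactly the paper's route: the paper derives the corollary from Theorem~\ref{thm:lbmembershiplisting} via the one-line observation that $2$-hop neighborhood listing subsumes membership listing of the $3$-vertex path. Your write-up simply makes explicit the two details the paper leaves implicit (that every edge relevant to a $P_3$-query at $v$ lies in $E^{v,2}_i$ whenever the answer could be $\true$, and that the reduction preserves the inconsistency schedule), both of which you verify correctly.
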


This matches the upper bound shown in Appendix \ref{appendix:1hop}.

\section{Listing $4$-cycles and $5$-cycles in $O(1)$ amortized complexity}
\label{section:cycles}
For the problem of $4$-cycles and $5$-cycles, we address the listing variant (not membership listing). This means that for every $4$-cycle ($5$-cycle) $H$, we require that at least one node $v$ in $H$ which, if queried for $H$, returns $\true$. Formally, the problem of \emph{$4$-cycle ($5$-cycle) listing} requires the data structure $DS_{v,i}$ at each node $v$ to respond to a query of the form $H=\{v,u_1,\ldots,u_3\}$ ($H=\{v,u_1,\ldots,u_4\}$) with an answer $\true$, $\false$, or $\inconsistent$, such that if all nodes of $H$ are queried, then either at least one node responds $\inconsistent$, or at least one node outputs $\true$ if and only if $H$ is a $4$-cycle ($5$-cycle) in $G_{i-1}$\footnote{We require correctness with respect to $G_{i-1}$ and not $G_i$ because of the inherent delay of topological changes on edges touching nodes within distance $3$.}.

\begin{theorem}\label{theorem:45cycles}
	There are deterministic distributed dynamic data structures for $4$-cycle and $5$-cycle listing, which handle edge insertions/deletions in $O(1)$ amortized rounds.
\end{theorem}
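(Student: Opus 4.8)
The plan is to mirror the structure of the triangle proof: reduce $4$-/$5$-cycle listing to (I) the maintainability of the robust $3$-hop neighborhood, which I would invoke wholesale from Section~\ref{section:2hop}, and (II) a purely combinatorial \emph{sufficiency} statement: for every $4$-cycle and every $5$-cycle present in $G_{i-1}$, at least one of its vertices has \emph{all} of the cycle's edges inside its robust $3$-hop knowledge. Given (I) and (II), the cycle data structure is obtained as in Theorem~\ref{theorem:triangles}: while maintaining its robust $3$-hop set, each node $v$ records every cycle all of whose edges it currently sees, answers a query by a local lookup, and returns $\inconsistent$ whenever its consistency flag is down. Correctness with respect to $G_{i-1}$ rather than $G_i$ is forced by the one-round delay of propagating far changes, exactly as in the footnote to the definition, and the amortized bound is inherited from the maintenance of the robust $3$-hop neighborhood with only an additive $O(1)$ for recording/forgetting a cycle when an edge enters or leaves the set.

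The heart is (II), and I would first fix which edges a vertex $v$ can see. An edge at graph-distance $1$ is always seen; an edge $\{a,b\}$ on a $2$-path $v-a-b$ is seen under pattern (a) when $t_{\{a,b\}}\ge t_{\{v,a\}}$; an edge on a $3$-path $v-a-b-c$ is seen under pattern (a) when the times increase monotonically toward the far edge; and, crucially, there is a \emph{dual} pattern (b) under which the far edge is seen precisely when it is the \emph{oldest} edge on the path. A small example shows both are needed: for the $4$-cycle whose edges are inserted in the alternating cyclic order ``low, high, medium, highest'', no vertex sees all four edges using pattern (a) alone, whereas the vertex opposite the oldest edge recovers it through the $3$-hop pattern (b).

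For $5$-cycles I would isolate the combinatorial core. Write the cycle as $v_1-\dots-v_5$ with edges $e_k=\{v_k,v_{k+1}\}$ (indices mod $5$). A vertex $v_i$ reaches the two edges adjacent to its far side, $e_{i+1}$ and $e_{i-2}$, only through their single $2$-paths (they sit at graph-distance exactly $2$ with no chord available), so these impose the two \emph{mandatory} conditions $t_{e_{i+1}}\ge t_{e_i}$ and $t_{e_{i-2}}\ge t_{e_{i-1}}$; the remaining \emph{antipodal} edge $e_{i+2}$ is reachable by two $3$-paths. Encoding clockwise ascents as $D_k=[\,t_{e_{k+1}}\ge t_{e_k}\,]$, the mandatory conditions at $v_i$ read $D_i\wedge\neg D_{i-2}$. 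I claim some $i$ satisfies them: otherwise $D_i\Rightarrow D_{i-2}$ for all $i$, and since $\gcd(2,5)=1$ the map $i\mapsto i-2$ is a single $5$-cycle on the indices, forcing all $D_k$ equal, impossible for a cyclic sequence of distinct times. Fixing such an ``anchor'' $v_i$, the mandatory conditions make both $3$-paths to $e_{i+2}$ monotone up to their last edge, so pattern (a) sees $e_{i+2}$ whenever $t_{e_{i+2}}\ge\min(t_{e_{i+1}},t_{e_{i-2}})$ and pattern (b) sees it whenever $t_{e_{i+2}}<\min(t_{e_{i+1}},t_{e_{i-2}})$; these exhaust all orders, so $v_i$ sees the whole $5$-cycle. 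The $4$-cycle is the easier analogue: the two far edges share the antipodal vertex, so detection reduces either to two $2$-paths meeting there or to one $3$-path closed by an incident edge, and the same (a)/(b) dichotomy, anchored at the oldest or newest edge, covers every insertion order.

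The step I expect to be the main obstacle is reconciling (I) and (II): pattern (b) must be chosen so that it is \emph{at once} strong enough for the dichotomy above and still maintainable in $O(1)$ amortized rounds. Unlike the triangle's pattern (b), here the far edge lies at graph-distance $2$ with \emph{neither} endpoint adjacent to $v$, so the imaginary-timestamp trick alone no longer suffices; instead $v$ must track, for each remembered far edge, the \emph{set of paths} along which it was learned, add a path when the edge is rediscovered, delete a path when one of its edges is deleted (propagating deletions to distance $2$), and forget the edge only when the set empties. Proving that this path-set bookkeeping yields exactly the (a)/(b) patterns and charges only $O(1)$ amortized inconsistent rounds per topology change, despite deletions that sever many paths at once and flickering that repeatedly re-creates them, is the delicate content I would import from Section~\ref{section:2hop}; the remaining work here is the sufficiency analysis above together with the routine lookup-and-flag implementation.
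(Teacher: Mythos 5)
Your overall reduction---maintain the robust $3$-hop neighborhood via the path-set mechanism of Section~\ref{section:2hop}, then prove that every $4$- or $5$-cycle lies entirely inside the robust $3$-hop neighborhood of at least one of its vertices---is exactly the paper's route. The gap is in the combinatorial sufficiency step, where you argue with the wrong temporal edge patterns. The robust $3$-hop neighborhood contains a $3$-path $v-u-w-x$ (all three of its edges, including the middle one) whenever $t_{\{w,x\}}\geq t_{\{u,w\}}$ and $t_{\{w,x\}}\geq t_{\{v,u\}}$; it does \emph{not} require the times to increase monotonically along the path, and it contains \emph{no} dual pattern in which the far edge of a $3$-path is the oldest. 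Your argument errs on both counts. First, your ``mandatory'' conditions $t_{e_{i+1}}\ge t_{e_i}$ and $t_{e_{i-2}}\ge t_{e_{i-1}}$ are not mandatory: a distance-$1$ edge of the cycle can enter $R^{v,3}$ as the \emph{middle} edge of a pattern-(b) $3$-path, so the parity argument over the $D_k$ solves the wrong constraint system. Second, and fatally, the case $t_{e_{i+2}}<\min(t_{e_{i+1}},t_{e_{i-2}})$ of your dichotomy appeals to a ``far edge oldest'' $3$-path pattern that the data structure does not provide and cannot provide in $O(1)$ amortized rounds: for $v-a-b-c$ with $\{b,c\}$ older than $\{a,b\}$, the node $a$ itself need not know $\{b,c\}$ (it lies outside $a$'s robust $2$-hop neighborhood, and there is no closing edge as in the triangle's pattern (b)), so $v$ has no adjacent informant; having $b$ ship its old neighborhood to $a$ upon insertion of $\{a,b\}$ runs into the $\Omega(n/\log n)$ bound of Corollary~\ref{cor:lb1hop}. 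A concrete failure: for the $5$-cycle with edge times $(2,5,1,4,3)$ in cyclic order, your anchor is $v_1$ (it satisfies $D_1\wedge\neg D_4$), but its antipodal edge has time $1<\min(5,4)$ and is invisible under the actual patterns, while neither $3$-path from $v_1$ satisfies pattern (b).

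The fix is simpler than your parity argument and is what the paper does: anchor at the vertex $v$ of the cycle for which the \emph{globally newest} edge is the far edge. For the $5$-cycle, both $3$-paths from $v$ to that edge satisfy pattern (b) outright (the far edge has maximal insertion time), and together they cover all five edges, the distance-$1$ edges included as middle edges; for the $4$-cycle, one $3$-path satisfies (b) and the remaining $2$-path satisfies (a). No oldest-edge pattern is needed anywhere, and the reconciliation of maintainability with sufficiency that you flag as the main obstacle dissolves.
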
	

In similar fashion to the previous section, we wish to characterize a temporal edge pattern $R^{v,3}_{i}\subseteq E^{v,3}_{i}$, which $v$ can maintain, upon which we can construct a data structure for $4$-cycle and $5$-cycle listing. We will choose $R^{v,3}_{i}$ (the \emph{robust $3$-hop neighborhood}) to be the set of edges which satisfy the time constraints given in Figure \ref{fig:tepcycle}.
Namely, we have
\begin{itemize}
	\item \textbf{(a)} $v-u-w\subseteq R^{v,3}_{i}$ if $t_{\{u,w\}}\geq t_{\{v,u\}}$;
	\item \textbf{(b)} $v-u-w-x\subseteq R^{v,3}_{i}$ if $t_{\{w,x\}}\geq t_{\{u,w\}},t_{\{v,u\}}$.
\end{itemize}
%$\{v,u\},\{u,w\}\in R^{v,3}_{i}$ if and only if \textbf{(a)}  $t_{\{u,w\}}\geq t_{\{v,u\}}$, or \textbf{(b)} the edge $\{w,x\}$ exists and $t_{\{w,x\}}\geq t_{\{u,w\}},t_{\{v,u\}}$, in which case it also holds that $\{w,x\}\in R^{v,3}_{i}$. 
Furthermore, we include all the edges incident at $v$ into $R_i^{v,3}$, that is, $E^{v,1}_i\subseteq R_i^{v,3}$.
\begin{figure}[h]
	\begin{center}
		\includegraphics[clip, scale=1]{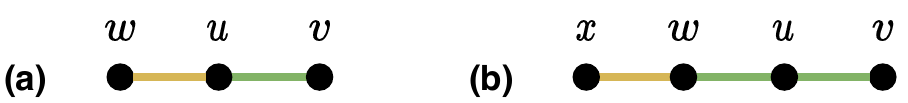}
	\end{center}
	\caption{\label{fig:tepcycle}The temporal edge patterns a node $v$ learns in the proof of Theorems \ref{theorem:modified2robust} and \ref{theorem:45cycles}. These patterns consist of \textbf{(a)} the robust $2$-hop neighborhood where $t_{\{u,w\}}\geq t_{\{v,u\}}$, and  \textbf{(b)} a pattern where $t_{\{w,x\}}\geq t_{\{u,w\}},t_{\{v,u\}}$. Note that patterns \textbf{(a)} and \textbf{(b)} are \textbf{not} mutually disjoint. Green colored edges have earlier insertion time compared to yellow colored edges.} 
\end{figure}

Formally, the \emph{robust $3$-hop neighborhood listing} problem requires the data structure $DS_{v,i}$ at round $i$ at each node $v$ to respond to a query of the form $\{u,w\}$ with an answer $\true$ if $\{u,w\}\in R^{v,3}_{i-1}$, $\false$ if $\{u,w\}\notin E^{v,3}_{i-1}$, or $\inconsistent$, if $DS_v$ is in an inconsistent state. We claim the following.

\begin{theorem}\label{theorem:modified2robust}
	There is a deterministic distributed dynamic data structure for the robust $3$-hop neighborhood listing, which handles edge insertions/deletions in $O(1)$ amortized rounds.
\end{theorem}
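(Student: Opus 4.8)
The plan is to mirror the three-part structure of the proof of Theorem~\ref{theorem:triangles}: specify the local data structure maintained at each node, prove that it equals $R^{v,3}_{i-1}$ whenever the consistency flag is \true, and then bound the amortized number of inconsistent rounds. At a node $v$ the data structure stores a set $S_{v,i}$ of edges, where each edge $e$ is tagged with a \emph{support set} of the paths (of length two or three) from $v$ to $e$ along which $e$ has been learned, together with a queue $Q_{v,i}$ of propagation items and a consistency flag $C_{v,i}$. Propagation of insertions follows the two temporal edge patterns of Figure~\ref{fig:tepcycle}: an inserted edge is forwarded \emph{away} from its endpoints, crossing an incident edge $f$ only when its own (imaginary) timestamp is at least $t_f$, so that it reaches precisely the nodes for which it lies in pattern \textbf{(a)} or \textbf{(b)}. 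A node at distance one from an endpoint re-forwards the edge once (creating the length-three pattern-\textbf{(b)} witnesses), while a node at distance two only records the new path and does not forward further. As in Theorem~\ref{theorem:triangles} I would first argue correctness for the ideal variant that uses true insertion times, and then check that the imaginary timestamps assigned on reception preserve every comparison the algorithm performs.

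The genuinely new ingredient, and the reason timestamps alone no longer suffice as they did at distance two, is the deletion mechanism built on the support sets. Whenever an edge $f$ is deleted I propagate this fact to every node within distance two of $f$; since supporting paths have length at most three, these are exactly the nodes that can hold a path traversing $f$, and each such node removes from its support sets every path that uses $f$. An edge is dropped from $S_{v,i}$ \emph{only} when its support set becomes empty, and it is resurrected the moment a fresh witnessing path is delivered. I would show that this bookkeeping is precisely what defeats the flickering scenario described for the two-hop case: a node can no longer retain an edge all of whose witnessing paths have been destroyed, even when intermediate edges repeatedly disappear and reappear and the corresponding notifications are delayed by congestion.

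For correctness I would establish two inclusions under $C_{v,i}=\true$. Completeness, $R^{v,3}_{i-1}\subseteq S_{v,i}$, follows by tracing each pattern-\textbf{(a)} or pattern-\textbf{(b)} witness path edge by edge and observing that an empty queue throughout the two-hop neighborhood of $v$ forces all pending forwards to have been delivered and the path to have been inserted into the support set. Soundness, $S_{v,i}\subseteq R^{v,3}_{i-1}$, follows because any path that has ceased to be valid---whether because its far edge or one of its connecting edges was deleted---lies within distance two of the deleted edge and has therefore been severed; hence any surviving support path certifies genuine membership in $R^{v,3}_{i-1}$. Together these give $S_{v,i}=R^{v,3}_{i-1}$, which is exactly the query guarantee, since $v$ answers \true{} iff the queried edge lies in $S_{v,i}$.

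The hard part, as anticipated, is the amortized round complexity. The skeleton generalizes the charging argument of Theorem~\ref{theorem:triangles}: a single topology change injects propagation items in only a constant number of distinct rounds---insertions forward at distances zero and one, and a deletion notification travels through distances zero, one and two---and the many items produced by one broadcast sit on distinct nodes and are dequeued in parallel, so they extend global inconsistency by only $O(1)$ rounds per change. The delicate point is that deleting one connecting edge $\{a,b\}$ can force a node to sever paths from the support sets of many different edges at once, so the per-change work is not bounded by a constant. I expect the main obstacle to be showing that each such severed path was created by an earlier forwarding step that is itself charged to some insertion, so that the total severance work telescopes against the insertions that produced it; combined with the parallel-processing observation, this should keep the number of globally inconsistent rounds within a constant factor of the number of topology changes, yielding the claimed $O(1)$ amortized bound.
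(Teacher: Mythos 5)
Your central mechanism -- tagging each learned edge with a support set of witness paths, propagating deletions to distance two so that every node that could hold a path through the deleted edge severs it, and dropping an edge only when its support set empties -- is exactly the paper's key idea, and your completeness argument and the charging scheme for the round complexity (each topology change causes enqueues in only $O(1)$ distinct rounds, processed in parallel across nodes) match the paper's. Your worry about the ``severance work'' is a non-issue in this model: severing paths is local computation, which is free; only rounds in which some queue is nonempty are charged, so no telescoping argument is needed.

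There are, however, two genuine gaps. First, you claim the exact equality $S_{v,i}=R^{v,3}_{i-1}$ by filtering the forwarding of insertions with timestamps so that an edge ``reaches precisely the nodes for which it lies in pattern (a) or (b).'' The problem as defined does not require this -- the data structure may answer either way on edges in $E^{v,3}_{i-1}\setminus R^{v,3}_{i-1}$ -- and the paper deliberately proves only the sandwich $R^{v,3}_{i-1}\subseteq \tilde{S}_{v,i}\subseteq E^{v,3}_{i-1}$ while broadcasting every dequeued item to all neighbors. Your soundness direction $S_{v,i}\subseteq R^{v,3}_{i-1}$ is the part you would have to work hardest for, and your stated justification (``any surviving support path certifies genuine membership in $R^{v,3}_{i-1}$'') only certifies membership in $E^{v,3}_{i-1}$ unless you additionally argue that every event invalidating the timestamp pattern also severs the path at $v$; you should either prove that carefully or retreat to the weaker (and sufficient) guarantee. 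Second, and more seriously, your completeness argument invokes ``an empty queue throughout the two-hop neighborhood of $v$,'' but you give $v$ no way to detect this without communication. The one-hop $\IsEmpty$ signal only certifies neighbors' queues; for distance-3 edges the relevant queue sits at distance two, and its contents take an extra round to reach $v$. The paper resolves this with an $\AreNeighborsEmpty$ indicator (relaying whether all of a neighbor's neighbors reported empty queues), by requiring \emph{two consecutive} quiet rounds before setting $C_{v,i}=\true$, and by stating correctness for distance-3 edges with respect to $G_{i-1}$ rather than $G_i$. Without some such mechanism a node can declare itself consistent while a deletion two hops away is still stuck in a queue, violating the requirement that a consistent data structure answer $\false$ on edges outside $E^{v,3}_{i-1}$.
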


Before proving Theorem \ref{theorem:modified2robust}, we show how it implies Theorem \ref{theorem:45cycles}.

\begin{proof}[Proof of Theorem \ref{theorem:45cycles}]
By Theorem \ref{theorem:modified2robust}, suppose for each node $v$ we have a data structure $DS_{v}$ such that at round $i$ it is able to respond to a query of the form $\{u,w\}$ with an answer $\true$ if $\{u,w\}\in R^{v,3}_{i-1}$, $\false$ if $\{u,w\}\notin E^{v,3}_{i-1}$, or $\inconsistent$, if $DS_v$ is in an inconsistent state. In addition, let $k\in\{4,5\}$. We claim it is possible for a node $v$, given a query of the form $H=\{v,u_1,\ldots,u_{k-1}\}$, to respond with an answer $\true$, $\false$, or $\inconsistent$, such that if all nodes of $H$ are queried, then either there is a response that is $\inconsistent$, or at least one node outputs $\true$ if and only if $H$ is a $k$-cycle in $G_{i-1}$. The new data structure will respond $\true$ on $H=\{v,u_1,\ldots,u_{k-1}\}$ if and only if the old data structure responded $\true$ on all $\{v,u_1\},\{u_1,u_2\},\ldots,\{u_{k-2},u_{k-1}\},\{u_{k-1},v\}$. We divide the proof into two cases.

\begin{enumerate}
	\item $k=4$: Suppose $H=\{v,u_1,u_2,u_3\}$ is a $4$-cycle in $G_{i-1}$ and assume, without loss of generality, that $t_{\{u_2,u_3\}}\geq t_{\{v,u_1\}},t_{\{u_1,u_2\}},t_{\{u_3,v\}}$. It is easy to verify that by the definition of $R^{v,3}_{i-1}$ it holds that $\{u_2,u_3\},\{v,u_1\},\{u_1,u_2\},\{u_3,v\}\in R^{v,3}_{i-1}$. Alternatively, suppose that $H=\{v,u_1,u_2,u_3\}$ is not a $4$-cycle in $G_{i-1}$, which implies one of its edges, $e$, is missing. Then the claim follows since a consistent $DS_{v}$ responds $\false$ on edges $e\notin E^{v,3}_{i-1}$.
	\item $k=5$: Suppose $H=\{v,u_1,u_2,u_3,u_4\}$ is a $5$-cycle in $G_{i-1}$ and assume, without loss of generality, that $t_{\{u_2,u_3\}}\geq t_{\{v,u_1\}},t_{\{u_1,u_2\}},t_{\{u_3,u_4\}},t_{\{u_4,v\}}$. It is easy to verify that by the definition of $R^{v,3}_{i-1}$ it holds that $\{u_2,u_3\},\{v,u_1\},\{u_1,u_2\},\{u_3,u_4\},\{u_4,v\}\in R^{v,3}_{i-1}$. Alternatively, suppose that $H=\{v,u_1,u_2,u_3,u_4\}$ is not a $5$-cycle in $G_{i-1}$, which implies one of its edges, $e$, is missing. Then the claim follows since a consistent $DS_{v}$ responds $\false$ on edges $e\notin E^{v,3}_{i-1}$.
\end{enumerate}
\vspace{-0.4cm}
\end{proof}

\section{The robust 3-hop neighborhood}
\label{section:2hop}

We now provide the proof of our algorithm for obtaining the robust 3-hop neighborhood listing. We refer the reader to the introduction for an informal description of this construction.

\begin{proof}[Proof of Theorem \ref{theorem:modified2robust}]
The data structure $DS_{v,i}$ at node $v$ at the end of round $i$ consists of the following: A set $S_{v,i}$ of items, where each item is an edge $e$ along with a set of paths $P_e$, a queue $Q_{v,i}$ of items, each of which is either a path with an insertion mark or an edge with an $O(1)$ bit number and a deletion mark, and a flag $C_{v,i}$. For convenience, denote by $\tilde{S}_{v,i}$ the set of edges in $S_{v,i}$, such that each edge $e$ has nonempty $P_e$. Our goal is to maintain that $R^{v,3}_{i-1}\subseteq \tilde{S}_{v,i}\subseteq E^{v,3}_{i-1}$ at the end of round $i$, or that $C_{v,i}=\false$ (the consistency flag). 

Initially, for all nodes $v$, we have that $S_{v,0}$ and $Q_{v,0}$ are empty, and $C_{v,0}=\true$, indicating that $DS_{v,0}$ is consistent.
The algorithm for a node $v$ in round $i\geq 1$ is as follows:
\begin{enumerate}
	\item \label{edge:init3} Initialization: Set $S_{v,i} = S_{v,i-1}$ and $Q_{v,i} = Q_{v,i-1}$. 
	
	\item \label{edge:chng3} Topology changes: Upon indications of edge insertions/deletions, for each such insertion/deletion $\{v,u\}$, the edge $\{v,u\}$ is enqueued into $Q_{v,i}$ along with a corresponding insertion/deletion mark. In case of deletions the attached number is $0$.
	
	\item \label{edge:comm3} Communication: If $Q_{v,i}$ is not empty, the node $v$ dequeues an item from $Q_{v,i}$ and broadcasts it. Node $v$ also broadcasts a Boolean indication $\IsEmpty$ of whether $Q_{v,i}$ was empty in the beginning of round $i$, and an indication $\AreNeighborsEmpty$ of whether $v$ received $\IsEmpty$ from \emph{all} of its neighbors in the end of round $i-1$. In actuality, we do not send $\IsEmpty=\true$: not receiving $\IsEmpty=\false$ by other nodes is interpreted as receiving $\IsEmpty=\true$. The same rule holds for $\AreNeighborsEmpty=\true$.
	
	\item \label{edge:updt3} Updating the data structure: Upon dequeueing an item from $Q_{v,i}$ and broadcasting it, or receiving an item from a neighbor $u$, if it has an insertion mark then it is a path $p$. Denote by $p'$ the path
	\[
	p'=\begin{cases}
	p,&\quad p=\{\{v,w\}\}\text{ for some node }w,\\
	v-p,&\quad \text{otherwise}.
	\end{cases}
	\] 
	In this case for every edge $e\in p'$, the subpath $p''\subseteq p'$ leading to $e$ along $p'$ is added into $P_e$ in $S_{v,i}$. Furthermore, if $p'$ is an edge or a $2$-path, $p'$ is enqueued on $Q_{v,i}$.
	
	Alternatively, suppose the item broadcasted by $v$ or received from a neighbor $u$ has a deletion mark. In this case the item is an edge $e$ with an attached number $\ell$. Then, for every edge $e'$ in $S_{v,i}$, if $P_{e'}$ contains a path which includes $e$, it is removed from $P_{e'}$. Furthermore, if $\ell\leq 1$, then $e$ is enqueued on $Q_{v,i}$ with an attached number $\ell+1$.
	
	In both cases, if either $Q_{v,i}$ is not empty, or an item with $\IsEmpty=\false$ or $\AreNeighborsEmpty=\false$ is received, then $C_{v,i}$ is set to $\false$. If this was not the case for rounds $i$ and $i-1$, $C_{v,i}$ is set to $\true$.
\end{enumerate}
~\\\textbf{Correctness:}
Suppose $DS_{v,i}$ is queried with $\{w,x\}$. Then $DS_{v,i}$ responds $\inconsistent$ if $C_{v,i}$ is set to $\false$, and otherwise it responds $\true$ if and only if $\{w,x\}$ has a nonempty $P_{\{w,x\}}$. We need to show that if $C_{v,i} = \true$ then $DS_{v,i}$ responds $\true$ if the edge is in $R^{v,3}_{i-1}$ and $\false$ if the edge is not in $E^{v,3}_{i-1}$. It will be more convenient to actually prove the claim for edges in the $2$-hop neighborhood of $v$ with respect to round $i$, and for the remaining edges in the $3$-hop neighborhood of $v$ with respect to round $i-1$. We will show this can be done without loss of generality. Recall that the robust $2$-hop neighborhood, $R_i^{v,2}$, is defined such that $v-u-w\subseteq R^{v,2}_{i}$ if $t_{\{u,w\}}\geq t_{\{v,u\}}$, and $E^{v,1}_i\subseteq R_i^{v,2}$. Formally, we will actually prove that $R^{v,2}_{i}\cup \left(R^{v,3}_{i-1}\setminus R^{v,2}_{i-1}\right)\subseteq \tilde{S}_{v,i}\subseteq E_{i}^{v,2}\cup \left(E^{v,3}_{i-1}\setminus E^{v,2}_{i-1}\right)$. Note that we want to prove that $R^{v,3}_{i-1}\subseteq\tilde{S}_{v,i}\subseteq E^{v,3}_{i-1}$, but this follows if we have $DS_{v,i}$ respond on queries of edges in the $2$-hop neighborhood according to $\tilde{S}_{v,i-1}$ and the remaining edges in the $3$-hop neighborhood according to $\tilde{S}_{v,i}$.

First, assuming that $\{w,x\}\notin E_{i}^{v,2}\cup \left(E^{v,3}_{i-1}\setminus E^{v,2}_{i-1}\right)$ we show that $\{w,x\}\notin\tilde{S}_{v,i}$. By definition, we need to show that there is no path $p$ (including $\{w,x\}$) in $S_{v,i}$ (and consequently $P_{\{w,x\}}$) in round $i$. Note also that if a path $p$ is not in $S_{v,i}$ then all superpaths $p'\supset p$ are not in $S_{v,i}$ as well. We divide into cases depending on the length of $p$, and show that no paths of length $1$, $2$ or $3$ exist in $P_{\{w,x\}}$, thereby proving $\{w,x\}\notin \tilde{S}_{v,i}$.

\begin{enumerate}
	\item If $|p|=1$ then $v\in\{w,x\}$ and our assumption is that $\{w,x\}\notin E_{i}^{v,1}$. Since $v\in\{w,x\}$, the \emph{last deletion of} $\{w,x\}$ was enqueued on $Q_{v,i}$. Clearly,  $\{w,x\}\notin \tilde{S}_{v,i}$ since $Q_{v,i}$ must be empty for $C_{v,i}=\true$, implying that by round $i$ the path $p$ was removed from $S_{v,i}$.
	\item If $|p|=2$ then $p=v-w-x$ and our assumption is that $\{w,x\}\notin E_{i}^{v,2}$. If $\{v,w\}\notin E_{i}^{v,1}$ then we are in the previous case. Suppose alternatively that $\{v,w\}\in E_{i}^{v,1}$. Note also that since $C_{v,i}=\true$ it holds that both $Q_{v,i}$ and $Q_{w,i}$ are empty. Now, if $v$ was informed by $w$ about the deletion of $\{w,x\}$ we are done. If $v$ was not informed by $w$ about the deletion of $\{w,x\}$ it means that on the round $j<i$ (since $Q_{w,i}$ is empty) when $w$ dequeued the \emph{last copy of the deletion of} $\{w,x\}$, $\{v,w\}\notin E_{j}^{v,1}$, and on some round $j<j'\leq i$, $\{v,w\}\in E_{j'}^{v,1}$. Since $Q_{w,j'}$ and onwards is promised to no longer include items with $\{w,x\}$, $p$ is never included again into $S_{v,i}$. 
	\item If $|p|=3$ then $p=v-u-w-x$ and our assumption is that $\{w,x\}\notin E^{v,3}_{i-1}$. If $\{v,u\}$ or $\{u,w\}$ is not in $E_{i}^{v,2}$ then we are in the previous case. Suppose alternatively that $\{v,u\},\{u,w\}\in E_{i}^{v,2}$. Note also that since $C_{v,i}=\true$ it holds that $Q_{v,i}$, $Q_{u,i}$ and $Q_{w,i-1}$ are empty. Now, if $v$ was informed by $u$ about the deletion of $\{w,x\}$ we are done. If $v$ was not informed by $u$ about the deletion of $\{w,x\}$ there are several cases to consider. 
	\begin{enumerate}
		\item Suppose that $w$ informed $u$ about the \emph{last copy of the deletion of} $\{w,x\}$ on some round $j''< i-1$ (since $Q_{w,i-1}$ is empty). If $v$ was not informed by $u$ about the deletion of $\{w,x\}$ it means that on the round $j''<j<i$ (since $Q_{u,i}$ is empty) when $u$ dequeued (last copy of) the deletion of $\{w,x\}$, $\{v,u\}\notin E_{j}^{v,1}$, and on some round $j<j'\leq i$, $\{v,u\}\in E_{j'}^{v,1}$. Since $Q_{u,j'}$ and onwards is promised to no longer include items with $\{w,x\}$, $p$ is never included again into $S_{v,i}$.
		
		\item Suppose that $w$ did not inform $u$ about the deletion of $\{w,x\}$. Since $C_{v,i}=\true$ if $\IsEmpty=\false$ was not received for two consecutive rounds, we know that $\{u,w\}\notin E^{v,3}_{i-1}$. Thus by an argument similar to the one for paths of length $2$, the path $u-w-x$ is never included into $S_{u,i-1}$. This implies $u$ could not send $\{w,x\}$ by round $i$, implying $p$ is never included again into $S_{v,i}$
	\end{enumerate}
\end{enumerate}

Second, assuming that $\{w,x\}\in R^{v,2}_{i}\cup \left(R^{v,3}_{i-1}\setminus R^{v,2}_{i-1}\right)$ we show that $\{w,x\}\in\tilde{S}_{v,i}$. By definition, we need to show that there is at least one path $p$ (including $\{w,x\}$) in $S_{v,i}$ (and consequently $P_{\{w,x\}}$) in round $i$. We divide into cases depending on the length of $p$ which exists in $E^{v,2}_{i}\cup \left(E^{v,3}_{i-1}\setminus E^{v,2}_{i-1}\right)$ and show that for each case at least one path exists in $P_{\{w,x\}}$, thereby proving $\{w,x\}\in\tilde{S}_{v,i}$.

\begin{enumerate}
	\item If $|p|=1$ then $v\in\{w,x\}$ and our assumption is that $\{w,x\}\in E_{i}^{v,1}\subseteq R^{v,2}_{i}$. Since $v\in\{w,x\}$, the \emph{last insertion of} $\{w,x\}$ was enqueued on $Q_{v,i}$. Clearly in this case $\{w,x\}\in \tilde{S}_{v,i}$ since $Q_{v,i}$ must be empty for $C_{v,i}=\true$, implying that by round $i$ the path $p$ was included in $S_{v,i}$.
	\item If $|p|=2$ then $\{w,x\}\in R^{v,2}_{i}$. In this case $p=v-w-x\subseteq E_{i}^{v,2}$ and $t_{\{v,w\}}\leq t_{\{w,x\}}$. We also have that $t_{\{w,x\}}< i$ since $C_{v,i}=\true$ (which implies that both $Q_{v,i}$ and $Q_{w,i}$ are empty). This implies $\{w,x\}$ was enqueued by $w$ in round $t_{\{w,x\}}$ and that $\{v,w\}\in E_{t_{\{w,x\}}}^{v,1}$. Consequently, since $Q_{w,i}$ is empty, $v$ will be informed by $w$ about the insertion of $\{w,x\}$ and so $p\in P_{\{w,x\}}$ in $S_{v,i}$.
	\item If $|p|=3$ then $\{w,x\}\in R^{v,3}_{i-1}$. In this case $p=v-u-w-x\subseteq E_{i}^{v,2}\cup \left(E^{v,3}_{i-1}\setminus E^{v,2}_{i-1}\right)$ and $t_{\{v,u\}},t_{\{u,w\}}\leq t_{\{w,x\}}$. Since $Q_{w,i-1}$ is empty we know that $t_{\{w,x\}}< i-1$ and $u$ was informed by $w$ about $\{w,x\}$ by round $i-1$. Since $Q_{u,i}$ is also empty we know that $v$ was informed by $u$ about $u-w-x$ by round $i$ implying $p\in P_{\{w,x\}}$ in $S_{v,i}$.
\end{enumerate}

~\\\textbf{Round complexity:}
A topology change in the pair $e=\{w,u\}$ in round $i$ causes an enqueue of an item to $Q_{u,i}$ and to $Q_{w,i}$. For nodes $v_1,\ldots,v_\ell$ which are neighbors of $u$ this can further cause an enqueue in round $i_u>i$ of an item to $Q_{v_1},\ldots,Q_{v_\ell}$, respectively. The situation is similar for $w$ with respect to round $i_w>i$. Since $e$ caused an enqueue on at most $3$ rounds and nodes dequeue a single element every round we then have, for every round $j$, that the number of rounds in which there exists at least one node $v$ with an inconsistent $DS_v$ until round $j$ is bounded by $3$ times the number of topology changes which occurred until round $j$. %Furthermore, by definition of the algorithm, whenever nodes are consistent they no longer communicate. 
This gives the claimed $O(1)$ amortized round complexity.
\end{proof}

\section{Lower bound for $k$-cycle listing whenever $k\geq 6$}
\label{section:cycleLB}

\begin{theorem-repeat}{theorem:cycleLB}
	\TheoremCycleLB
\end{theorem-repeat}

\begin{proof}
	Let $t$ and $D$ be integers to be specified later. Denote $\gamma=\lceil k/2\rceil-1$. We will consider the counter example on $\gamma t+tD=n$ nodes where we have nodes $\{u_i^j\}_{(i,j)\in[t]\times[\gamma]}$ and $\{v_i^j\}_{(i,j)\in [t]\times[D]}$. The adversary will proceed in two phases. In phase I, for $\ell=1,\ldots,t$, it will arbitrarily connect $u^1_\ell$ to exactly $2D/3$ nodes out of $\{v_\ell^j\}_{j\in [D]}$. It will then connect all of $\{v_\ell^j\}_{j\in [D]}$ to $u_\ell^2$, and further connect the path $u_\ell^2-\ldots - u_\ell^{\gamma}$. In phase II (illustrated in Figure \ref{fig:cycleLB}), it performs the following steps for $\ell=1,\ldots,t$:
	\begin{enumerate}
		\item For every $m=1,\ldots,\ell-1$:
		\begin{enumerate}
			\item Connect $u_\ell^1$ to $u_m^1$ and $u_\ell^{\gamma}$ to $u_m^{\gamma}$
			\item Wait for the algorithm to stabilize.
			\item Disconnect $u_\ell^1$ from $u_m^1$ and $u_\ell^{\gamma}$ from $u_m^{\gamma}$.
		\end{enumerate}
		\item If $\lfloor k/2\rfloor<\lceil k/2 \rceil$: Disconnect $u_\ell^{\lfloor k/2\rfloor-2}$ from $u_\ell^{\lceil k/2\rceil-2}$ and $u_\ell^{\lceil k/2\rceil-2}$ from $u_\ell^{\gamma}$. Then connect $u_\ell^{\lfloor k/2\rfloor-2}$ to $u_\ell^{\gamma}$.
	\end{enumerate}

	\begin{figure}[htbp]
		\begin{center}
			\includegraphics[scale=0.5, trim = {5cm 9cm 5cm 5cm}, clip]{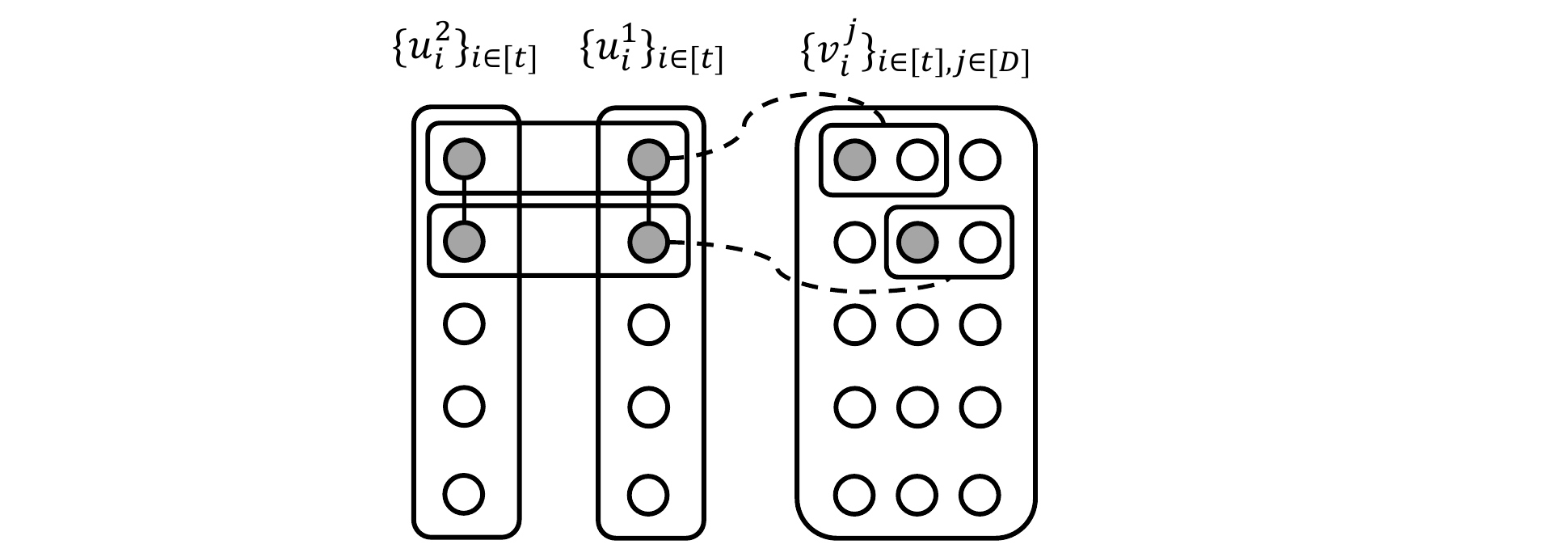}
		\end{center}
		\caption{\label{fig:cycleLB}Illustration of Phase II of the adversary in the proof of Theorem 5 for $k=6$ ($\gamma=2$) and $n=25$ ($t=5$, $D=3$), in iteration $\ell=2$ and subiteration $m=1$. The node $u^2_1$  ($u^2_2$) is connected to all 3 nodes in its 'row' $\{v^j_1\}_{j\in [D]}$ ($\{v^j_2\}_{j\in [D]}$), while $u^1_1$  ($u^1_2$) is connected to 2 of those. The nodes $u^1_1$, $u^1_2$, $u^2_1$, $u^2_2$ create many 6-cycles, with all possible nodes in $v^j_i$ that are connected to them. An example is given in gray.}
	\end{figure}
	
	Choosing $t=D+\gamma=\sqrt{n}$, we claim that no algorithm can handle this scenario with $o\left(\frac{\sqrt{n}}{\log n}\right)$ amortized round complexity. Indeed, consider the time we connect $u_\ell^1$ and $u_\ell^\gamma$ to $u_m^1$ and $u_m^\gamma$, respectively, for some $m<\ell$, and wait for the algorithm to stabilize. Before proceeding we make the following observation: from the outputs of the entire component $C_\ell:=\{u_\ell^j\}_{j\in[\gamma]}\cup \{v_\ell^j\}_{j\in [D]}$ we can deduce at least $D/6$ nodes from $\{v_m^j\}_{j\in [D]}$ which are connected to $u_m^1$, or from the outputs of the entire component $C_m:=\{u_m^j\}_{j\in[\gamma]}\cup \{v_m^j\}_{j\in [D]}$ we can deduce at least $D/6$ nodes from $\{v_\ell^j\}_{j\in [D]}$ which are connected to $u_\ell^1$. This is because, by construction, there are at least $D/3$ indices $J\subseteq [D]$ such that for every $j\in J$, both $v_\ell^j$ and $v_m^j$ are connected to $u_\ell^1$ and $u_m^1$, respectively. Then, by correctness, for each such index $j\in J$ from the outputs of either $C_\ell$ or $C_m$ we must deduce it is indeed the case that both $v_\ell^j$ and $v_m^j$ are connected to $u_\ell^1$ and $u_m^1$, respectively. This is because for any $j\in [D]$ such that the $k$-cycle
	\[
	v_\ell^j-u_\ell^1-u_m^1-v_m^{j}-u_m^2-\ldots -u_m^{\lfloor k/2\rfloor-2}-u_m^\gamma-u_\ell^\gamma-u_\ell^{\lceil k/2\rceil-2}-\ldots-u_\ell^2-v_\ell^j
	\]
	exists in the graph, by correctness, we will also deduce it exists from the outputs of either $C_\ell$ or $C_m$. Therefore, by a Pigeonhole argument, from the outputs of either $C_\ell$ or $C_m$ we must deduce at least $D/6$ such indices. Since each component had originally $\binom{D}{2D/3}$ configurations, and now from the output of at least one component (of $C_\ell$ or $C_m$) we can reduce the number of possible configurations of the other component to at most $\binom{5D/6}{D/2}$, it implies at least one component received at least $\log\binom{D}{2D/3}-\log\binom{5D/6}{D/2}=\Omega(D)$ bits by the algorithm. Crucially, it is not clear yet on which edges this happened.
	
	Now, we consider iteration $\ell$ where component $C_\ell$ connects to components $C_1,\ldots,C_{\ell-1}$. We will divide these components into two groups. Let $M=\{m_1,\ldots,m_h\}\subseteq [\ell-1]$ be indices such that for every $m\in M$, component $C_\ell$ received $\Omega(D)$ bits, and let the remaining indices $[\ell-1]\setminus M$ be such that for every $m\in [\ell-1]\setminus M$, component $C_m$ received $\Omega(D)$ bits due to connecting to component $C_\ell$. In addition, denote by $I_{m,\ell}$ the number of bits sent on the edges $\{u_m^1,u_\ell^1\}$ and $\{u_m^\gamma,u_\ell^\gamma\}$. 
	
	Clearly, for $m\in[\ell-1]\setminus M$, we have $I_{m,\ell}\geq \Omega(D)$, simply because whenever each such $C_m$ is connected to $C_\ell$, it must receive the $\Omega(D)$ bits on the edges $\{u_m^1,u_\ell^1\},\{u_m^\gamma,u_\ell^\gamma\}$, as it has no chance to receive these bits before these edges exist. This is because when $C_m$ had connections to other components they were never connected to $C_\ell$.
	
	Similarly, we have that $\sum_{m=1}^{m_1}I_{m,\ell}\geq \Omega(D)$. Next, by assumption, from the output of $C_\ell$ we can reduce the number of possible configurations of $C_{m_1}$ and $C_{m_2}$ from $\binom{D}{2D/3}^2$ to at most $\binom{5D/6}{D/2}^2$, implying $C_\ell$ received at least $2\left(\log\binom{D}{2D/3}-\log\binom{5D/6}{D/2}\right)=2\cdot \Omega(D)$ bits by the algorithm. Moreover, these bits could have been received only on the edges $\{\{u_m^1,u_\ell^1\},\{u_m^\gamma,u_\ell^\gamma\}\}_{m=1}^{m_2}$. Therefore, we have $\sum_{m=1}^{m_2}I_{m,\ell}\geq 2\cdot \Omega(D)$. Repeating this argument yields that $\sum_{m=1}^{m_h}I_{m,\ell}\geq h\cdot \Omega(D)$. This, together with $\sum_{m\in [\ell-1]\setminus M}I_{m,\ell}\geq (\ell-1-h)\cdot \Omega(D)$, implies that $2(I_{1,\ell}+\ldots +I_{\ell-1,\ell})\geq (\ell-1)\cdot \Omega(D)$.
	
	Consequently, we have shown that on iteration $\ell$ the total communication is bounded from below by $\Omega(\ell D)$. Since there are $t$ such iterations, it implies the total communication is $\Omega(t^2D)$. As there were only $O(t^2+tD)$ topological changes, and communication happend on only two edges at a time with $O(\log n)$ capacity, the amortized round complexity is
	\[
	\Omega\left(\frac{t^2D}{(t^2+tD)\log n}\right)=\Omega\left(\frac{\sqrt{n}}{\log n}\right).
	\]
\end{proof}

\begin{remark}
It is possible to modify the lower bound proof for $6$-cycle listing to obtain a similar lower bound for $3$-path listing. This is done by unifying $u_\ell^1$ and $u_\ell^\gamma$ into a single node and then connecting to it exactly $2D/3$ nodes from $\{v_\ell^j\}_{j\in[D]}$ as before. This means that already for some $4$-vertex subgraphs we cannot obtain ultra fast graph listing in the highly dynamic setting.
\end{remark}

%\paragraph{Acknowledgements:}

%\bibliographystyle{alpha}
\bibliographystyle{plainurl}
\bibliography{bibl}

\appendix

\section{Warm-up: The robust $2$-hop neighborhood}\label{sec:robust1hop}
\label{section:robust} 

Let $E^{v,r}_i$ denote the subset of $E_i$ of all edges contained in the $r$-hop neighborhood of $v$. We define a notion of robustness of an edge in $E^{v,2}_i$ with respect to $v$, as follows. We associate every edge $e$ with a value $t_e$ that we call its \emph{insertion time}, which is the latest round number in which $e$ was inserted (initially $t_e=-1$). Since insertion times can grow arbitrarily large, we stress that they are not part of any algorithm and are defined only for the sake of analysis. We say that an edge $e=\{u,w\}$ in $G_i$ is \emph{$(v,i)$-robust} if $v$ is one of its endpoints, or  $t_e \geq t_{\{v,u\}}$ and $\{v,u\} \in G_i$, or $t_e \geq t_{\{v,w\}}$ and $\{v,w\} \in G_i$. Now, instead of requiring that each node $v$ learns all of $E^{v,2}_i$, we require that $v$ learns all edges that are $(v,i)$-robust. To this end, we denote by $\Rvi$ the set of $(v,i)$-robust edges. 

Formally, the \emph{robust $2$-hop neighborhood listing} problem requires the data structure $DS_v$ at each node $v$ to respond to a query of the form $\{u,w\}$ with an answer $\true$ if the edge is $(v,i)$-robust, $\false$ if it is not $(v,i)$-robust, or $\inconsistent$, if $DS_v$ is in an inconsistent state. 
Recall that the node $v$ is not allowed to use any communication for deciding on its response. We claim the following.
\begin{theorem}
\label{theorem:N1}
There is a deterministic distributed dynamic data structure for the robust $2$-hop neighborhood listing, which handles edge insertions and deletions in $O(1)$ amortized rounds.
\end{theorem}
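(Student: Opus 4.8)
The plan is to build a data structure that is exactly the pattern \textbf{(a)} half of the triangle construction in the proof of Theorem~\ref{theorem:triangles}, since $R^{v,2}_i$ is precisely the set of edges $\{u,w\}$ that either touch $v$ or satisfy $t_{\{u,w\}}\ge t_{\{v,u\}}$ for an incident edge $\{v,u\}\in G_i$ (symmetrically through $w$). Concretely, each node $v$ holds a set $S_{v,i}$ of edges (each carrying an imaginary timestamp $t'_e$), a queue $Q_{v,i}$ of edges tagged with an insertion/deletion mark, and a consistency flag $C_{v,i}$. On an incident change $\{v,u\}$, the node updates $S_{v,i}$ (adds on insertion, removes on deletion) and enqueues $\{v,u\}$ with the matching mark; additionally, on a deletion of $\{v,u\}$ it performs the \emph{forgetting} step, removing every $\{u,z\}\in S_{v,i}$ whose justification through $z$ has vanished, i.e.\ for which $\{v,z\}\notin S_{v,i}$ or $t'_{\{u,z\}}<t_{\{v,z\}}$. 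In the communication step $v$ dequeues one item $e$ and forwards it only to neighbors $u$ with $t_e\ge t_{\{v,u\}}$, together with an $\IsEmpty$ bit; $C_{v,i}$ is set to $\false$ exactly when $Q_{v,i}$ is nonempty or some neighbor reported $\IsEmpty=\false$, answering a query $\{u,w\}$ as $\inconsistent$ in that case and otherwise as $\true$ iff $\{u,w\}\in S_{v,i}$.

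For correctness I would prove that $C_{v,i}=\true$ implies $S_{v,i}=R^{v,2}_i$, treating $v\in e$ directly and focusing on $e=\{u,w\}$ with $v\notin e$. If $e$ is $(v,i)$-robust there is an incident edge, say $\{v,u\}\in G_i$, with $t_e\ge t_{\{v,u\}}$; since $C_{v,i}=\true$ forces all relevant queues empty, the filtered broadcast rule guarantees $u$ forwarded $e$ to $v$, and because $\{v,u\}$ never left $G_i$ with $t_e\ge t_{\{v,u\}}$, the forgetting step never discarded $e$, so $e\in S_{v,i}$. Conversely, if $e$ is not $(v,i)$-robust then every incident edge $\{v,z\}\in G_i$ has $t_e<t_{\{v,z\}}$; hence once the queues drain $e$ was either never inserted or was removed by the forgetting step, giving $e\notin S_{v,i}$. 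I note that this is exactly where the robustness definition earns its keep: the flickering scenario of the introduction, where both $\{v,u\}$ and $\{v,w\}$ are reinserted with larger timestamps, makes $\{u,w\}$ genuinely non-robust, and the forgetting step then correctly removes it.

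The main obstacle, as flagged in the warm-up discussion, is that the true insertion times $t_e$ grow without bound and cannot be shipped in $O(\log n)$-bit messages, yet the forgetting step must compare $t_{\{u,z\}}$ against $t_{\{v,z\}}$. I would resolve this as in Theorem~\ref{theorem:triangles}: when $v$ first receives $\{u,z\}$ through a neighbor it records the \emph{imaginary} timestamp $t'_{\{u,z\}}=t_{\{v,u\}}$ and takes the max on rediscovery. The key lemma is that this substitution preserves the only comparison the forgetting step evaluates, namely $t'_{\{u,z\}}<t_{\{v,z\}}\iff t_{\{u,z\}}<t_{\{v,z\}}$: on the one hand $t'_{\{u,z\}}\le t_{\{u,z\}}$ because $e$ is only forwarded across an incident edge with a smaller timestamp, and on the other hand if $t_{\{u,z\}}\ge t_{\{v,z\}}$ with $\{v,z\}\in G_i$ then $z$ forwards $\{u,z\}$ to $v$, pushing $t'_{\{u,z\}}$ up to at least $t_{\{v,z\}}$. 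Establishing this equivalence, and thereby that the real algorithm behaves identically to the ``ideal'' one running with true timestamps, is the part I expect to require the most care.

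Finally, for the round complexity, a single topology change on $\{w,u\}$ enqueues at most one item into each of $Q_{u,i}$ and $Q_{w,i}$ and triggers no further enqueues elsewhere, since there is no pattern \textbf{(b)} cascade here and all propagation stays within distance $1$. As every node dequeues one item per round, the number of rounds in which some $DS_v$ is inconsistent, counted up to any round $j$, is at most twice the number of topology changes occurring by round $j$, which yields the claimed $O(1)$ amortized complexity.
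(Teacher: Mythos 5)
Your proposal is correct and follows essentially the same route as the paper's proof in Appendix~\ref{section:robust}: the same data structure (edge set with imaginary timestamps, queue with insertion/deletion marks, consistency flag), the same forgetting rule on deletions, the same filtered forwarding to neighbors $u$ with $t_e\ge t_{\{v,u\}}$, and the same two-stage correctness argument via an ideal algorithm with true timestamps followed by the equivalence $t'_{\{u,z\}}<t_{\{v,z\}}\iff t_{\{u,z\}}<t_{\{v,z\}}$. Your two-sided justification of that equivalence and the amortization count (at most two enqueues per topology change, one dequeue per round) match the paper's reasoning.
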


\begin{proof}
	
The data structure $DS_{v,i}$ at node $v$ at the end of round $i$ consists of the following: A set $S_{v,i}$ of items, where each item is an edge $e=\{a,b\}$ along with a timestamp $t'_e$, a queue $Q_{v,i}$ of items, each of which is an edge along with an insertion/deletion mark, and a flag $C_{v,i}$. We make a distinction between $t_e$ which is the true timestamp when $e$ was added and $t'_e$ which is the imaginary timestamp maintained inside $S_{v,i}$. Note that for every $e$ adjacent to $v$, the node $v$ knows the value $t_e$, while for non adjacent edges we only know $t'_e$.
We will make sure that an edge $\{a,b\}$ appears in at most a single item in $S_{v,i}$. Our goal is to maintain that $\Rvi= S_{v,i} $ at the end of round $i$, or that $C_{v,i}=\false$ (the consistency flag). 

Initially, for all nodes $v$, we have that $S_{v,0}$ and $Q_{v,0}$ are empty, and $C_{v,0}=\true$, indicating that $DS_{v,0}$ is consistent.
The algorithm for a node $v$ in round $i\geq 1$ is as follows:
\begin{enumerate}
	\item \label{edge:init} Initialization: Set $S_{v,i} = S_{v,i-1}$ and $Q_{v,i} = Q_{v,i-1}$. 
	
	\item \label{edge:chng} Topology changes: Upon indications of edge deletions, for each such deletion $\{v,u\}$, the edge $\{v,u\}$ is removed from $S_{v,i}$. 
	Then, for each such deletion $\{v,u\}$, all edges $\{\{u,z\}\in S_{v,i} \mid \text{either }\{v,z\} \not\in S_{v,i} \text{ or }t'_{\{u,z\}} < t_{\{v,z\}} \}$ are removed from $S_{v,i}$. Afterwards, upon an indication of an edge insertion $\{v,u\}$, the edge $\{v,u\}$ is added to the set $S_{v,i}$.
	In both cases (insertion and deletion), the pair $\{v,u\}$ is enqueued into $Q_{v,i}$ along with a corresponding insertion/deletion mark.

	\item \label{edge:comm} Communication: 
	If $Q_{v,i}$ is not empty, the node $v$ dequeues an item $e$ from $Q_{v,i}$ and sends it to all of its neighbors $u$ such that $t_e \geq t_{\{ v,u\}}$, along with a Boolean indication $\IsEmpty$ of whether $Q_{v,i}$ is now empty or not. In actuality, we do not send $\IsEmpty=\true$: not receiving $\IsEmpty=\false$ by other nodes is interpreted as receiving $\IsEmpty=\true$.

	\item \label{edge:updt} Updating the data structure: Upon receiving an item $e=\{u,w\}$ from a neighbor $u$, node $v$ sets updates $S_{v,i}$ according to the insertion/deletion mark. Furthermore, for the case of insertion, if $e\notin S_{v,i-1}$ we set $t'_e=t_{\{u,v\}}$. Otherwise we update $t'_e=\max \{t'_e, t_{\{u,v\}}\}$.
	If $Q_{v,i}$ is not empty, or an item with $\IsEmpty=\false$ is received, 
	then $C_{v,i}$ is set to $\false$. Otherwise, $C_{v,i}$ is set to $\true$.
\end{enumerate}

~\\\textbf{Correctness:} Suppose $DS_{v,i}$ is queried with $\{u,w\}$. Then $DS_{v,i}$ responds $\inconsistent$ if $C_{v,i}$ is set to $\false$, and otherwise it responds $\true$ if and only if $\{u,w\} \in S_{v,i}$. We need to show that if $C_{v,i} = \true$ then $DS_{v,i}$ responds $\true$ if the edge is $(v,i)$-robust and $\false$ otherwise. We will first prove correctness for the \emph{ideal algorithm}, where we set $t'_e = t_e$ (the true edge timestamp). Then, we show that our algorithm behaves exactly the same as the ideal algorithm, even with the modified timestamps.

To show the correctness of the ideal algorithm, we want to show that if $C_{v,i}=\true$, it holds that $S_{v,i}=\Rvi$. Let $e=\{u,z\}\in \Rvi$, and we show that if $C_{v,i}=\true$ then $e$ is also in $S_{v,i}$. If $v\in e$ the proof is direct, so we focus on the case where $v \notin e$. Because $e\in \Rvi$ there must exist an edge $e'=\{u,v\} \in \Rvi$ such that $t_{e} \geq t_{e'}$. That is, the edge $e'$ was added at the same round or before $e$, and remained there until the $i$-th round. As $C_{v,i}=\true$ implies that the queue of $u$ is empty, this means that $v$ must have received the edge $e'$ by iteration $i$ and added it to $S_{v,i}$. As $e'$ was unchanged throughout this time, this implies that the edge $e$ was not deleted from $S_{v,i}$, up to and including round $i$.

Next, let $e=\{u,z\}\notin \Rvi$, and we show that if $C_{v,i}=\true$ then $e$ is also  not in $S_{v,i}$. Here again the case where $v\in e$ is trivial. Furthermore, if at the time when $e$ was added both $\{u,v\},\{z,v\}$ did not exist, it will never be sent to $v$ due to the condition in step~\ref{edge:comm}. Thus, we are only concerned with the case where $e$ was added to $S_{v,j}$ for $j<i$, but is not in $\Rvi$. This implies that there exists no $e'\in \Rvi$ such that $e\cap e' \neq \emptyset$ and $t_{e} \geq t_{e'}$. If such an edge $e'$ would exist, this implies that $e$ is $(v,i)$-robust via the definition of a robust neighborhood. By step~\ref{edge:chng} of our algorithm, this implies that $e$ must be deleted by the $i$-th iteration. This completes the proof for the ideal algorithm.

 We need to show that the set $S_{v,i}$ maintained by our algorithm is the same as it would be if the value of $t'_e$ was set to $t_e$ (the ideal algorithm). For insertions of edges the timestamp does not play a role in the algorithm, thus we must only concern ourselves with deletions. When an edge $\{u,v\}$ is deleted, we remove all edges $\{\{u,z\}\in S_{v,i} \mid \text{either }\{v,z\} \not\in S_{v,i} \text{ or }t'_{\{u,z\}} < t_{\{v,z\}} \}$. For the condition $\{v,z\} \not\in S_{v,i}$ the timestamp is irrelevant, thus we must only consider the case where $\{v,z\} \in S_{v,i}$ and $t'_{\{u,z\}} < t_{\{v,z\}}$. Let us consider what is implied when this condition holds in the ideal algorithm. This implies that $\{v,z\}$ was added \emph{after} $\{u,z\}$. This in turn means that our algorithm will also remove this edge as $t'_{\{u,z\}}=t_{\{u,v\}} < t_{\{v,z\}}$. On the other hand if the edge is kept because $t'_{\{u,z\}} \geq t_{\{v,z\}}$, this means that $\{v,z\}$ was added \emph{before} $\{u,z\}$ and thus $t'_{\{u,z\}}\geq t_{\{v,z\}}$. This is because the queues of all neighbors are empty, and thus the value of $t'_{\{u,z\}}$ will be set to at least $t_{\{v,z\}}$.
In both cases our algorithm acts exactly the same as the ideal algorithm and thus it holds that $S_{v,i}=\Rvi$ when $C_{v,i} = \true$.

~\\\textbf{Round complexity:}
A topology change in the pair $e=\{w,u\}$ in round $i$ causes an enqueue of an item to $Q_{u,i}$ and to $Q_{w,i}$. We then have, for every round $j$, that the number of rounds in which there exists at least one node $v$ with an inconsistent $DS_v$ until round $j$ is bounded by the number of topology changes which occurred until round $j$. %Furthermore, by the definition of the algorithm, whenever nodes are consistent they no longer communicate. 
This gives the claimed $O(1)$ amortized round complexity.
\end{proof}

%In Corollary \ref{cor:lb1hop} we show that obtaining the entire $1$-hop edge neighborhood is expensive and must have a linear amortized round complexity. The proof appears in Section~\ref{section:cliques} since it follows from a lower bound on listing subgraphs.

\section{$2$-hop neighborhood listing in $O\left(n/\log n\right)$ amortized complexity}\label{appendix:1hop}

\begin{lemma}\label{lem:up1hop}
	There is a deterministic distributed dynamic data structure for $2$-hop neighborhood listing which handles edge insertion and deletions in $O\left(\frac{n}{\log n}\right)$ amortized rounds.
\end{lemma}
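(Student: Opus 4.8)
The plan is to match the lower bound of Corollary~\ref{cor:lb1hop} by observing that the only reason full $2$-hop neighborhood listing is expensive is the need to transmit entire neighbor lists, and that these lists can be encoded far more compactly than by sending individual edge identifiers. Recall that $2$-hop neighborhood listing asks each node $v$ to know all of $E^{v,2}_i$, i.e., every edge incident to $v$ or to a neighbor of $v$; equivalently, $v$ must know the neighbor set $N(u)$ of each of its current neighbors $u$. The naive approach of having $u$ forward each incident edge to $v$ as a separate $\Theta(\log n)$-bit identifier costs $\Theta(\deg(u))$ rounds per new neighbor, which is $\Theta(n)$ in the worst case. The key idea is instead to have $u$ send its neighbor set to $v$ as an $n$-bit \emph{characteristic vector} (a bitmap over $[n]$): this occupies only $n$ bits regardless of $\deg(u)$, and hence can be streamed over the link $\{u,v\}$ in $\lceil n/\log n\rceil$ rounds. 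This single change is exactly the $\log n$-factor that separates $O(n/\log n)$ from $O(n)$.

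Concretely, each node $v$ would maintain, for every current neighbor $u$, a bitmap $B^v_u$ recording its current belief about $N(u)$, and reconstruct its $2$-hop neighborhood as $(\{v\}\times N(v))\cup\bigcup_{u\in N(v)}(\{u\}\times B^v_u)$. The algorithm mirrors the queue-based structure of Theorem~\ref{theorem:N1}: upon any topology change incident to $v$ (an insertion or deletion of some edge $\{v,w\}$), the node $v$ updates its own neighbor set, discards $B^v_w$ in the case of a deletion, and enqueues into $Q_{v}$ a task to rebroadcast its current bitmap, split into $\lceil n/\log n\rceil$ chunks. Each round $v$ dequeues one chunk and broadcasts it to all neighbors simultaneously (the same content on every incident link, as the $\congest$ model permits), together with the $\IsEmpty$ and $\AreNeighborsEmpty$ indicators used in the previous proofs. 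A neighbor that receives a complete run of chunks installs the freshly received bitmap as its new $B^v_u$. The flag $C_{v}$ is set to $\true$ only when $Q_v$ is empty and $v$ has received complete, current bitmaps from all its neighbors, as certified by the emptiness indicators.

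For correctness I would argue that when $C_v=\true$, the queue $Q_v$ is empty (so $v$ has finished transmitting its own up-to-date neighbor set) and, for each neighbor $u$, the last bitmap $v$ received from $u$ was sent while $Q_u$ was empty and reflects $N(u)$ at that time; together these give $\bigl(\{v\}\times N(v)\bigr)\cup\bigcup_{u\in N(v)}(\{u\}\times B^v_u)=E^{v,2}_i$. The subtlety to handle is flickering of an edge $\{u,v\}$ during a transmission: I would use the fact that information here travels only a single hop and that every (re-)insertion of $\{u,v\}$ re-enqueues a fresh full bitmap at $u$, so that once $C_v=\true$ the bitmap $v$ holds for $u$ is necessarily the current one. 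This is where the $\IsEmpty$/$\AreNeighborsEmpty$ bookkeeping must be set up with the same care as in Theorem~\ref{theorem:modified2robust}, though the analysis is simpler because no intermediate relay node is involved.

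For the round complexity, note that each topology change $\{u,w\}$ enqueues $\lceil n/\log n\rceil$ chunks at each of $u$ and $w$, and nothing else is ever enqueued. A node is inconsistent only while its own queue is nonempty, or while it is in the middle of receiving a bitmap from a neighbor whose queue is nonempty; the latter rounds coincide with rounds in which that neighbor's queue is nonempty, so they are already accounted for. Hence the number of rounds in which some node is inconsistent is at most the total number of chunks ever enqueued, namely $O(n/\log n)$ per topology change. Dividing by the number of topology changes yields the claimed $O(n/\log n)$ amortized round complexity, which is tight by Corollary~\ref{cor:lb1hop}. I expect the main obstacle to be the correctness bookkeeping around partially received bitmaps and flickering edges, rather than the (clean) amortized counting.
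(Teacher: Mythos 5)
Your proposal is correct and rests on exactly the same idea as the paper's proof: encode a node's neighborhood as an $O(n)$-bit characteristic vector and stream it over a link in $O(n/\log n)$ chunks of $O(\log n)$ bits, so that each topology change enqueues at most $O(n/\log n)$ items and the amortized bound follows by the usual queue accounting. The only (immaterial) difference is that the paper sends the full snapshot only to the newly attached endpoint upon an insertion and otherwise propagates each change as a single $O(1)$-size item on per-neighbor update queues, whereas you re-stream the entire bitmap to all neighbors on every incident change; both variants cost $O(n/\log n)$ enqueued items per topology change and yield the same bound.
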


\begin{proof}	
	Consider an algorithm where a node $v$ maintains a separate update queue $Q_u$ for each of its neighbors $u\in N_v$, with the goal that each neighbor will know all changes done to $N_v$. Upon edge deletion $\{u,v\}$, the endpoints $v$ and $u$ both enqueue the deletion event $\{u,v\}$ on the update queues for all their neighbors, incurring $O(1)$ amortized complexity. For edge insertion $\{u,v\}$,  the endpoints $v$ and $u$  also enqueue a single item on each update queue for all their neighbors. Furthermore, upon such an insertion, each of the endpoints, $v$ and $u$, takes a snapshot of its neighborhood, which is an $O(n)$ bit string, and enqueues it on the update queue of the other endpoint (which is equivalent to enqueuing $O(n/\log n)$ items). Since every topology change causes an enqueue of at most $O(n/\log n)$ items for each queue, and nodes dequeue a single element from each queue every round, the amortized round complexity is $O(n/\log n)$.
	
\end{proof}

\begin{remark}
	By Combining Lemma \ref{lem:up1hop} with Theorem \ref{thm:lbmembershiplisting} it follows that membership listing for all $2$-diameter graphs can be handled in $O\left(\frac{n}{\log n}\right)$ amortized rounds. 
	%\vdel{It is not clear yet if $O\left(\frac{n}{\log n}\right)$ complexity is sufficient for membership listing of all $k$-graphs which are not $k$-cliques.}
\end{remark}

\end{document}